\documentclass[subscriptcorrection,upint,varvw,barcolor=Goldenrod3,mathalfa=cal=euler,balance,hyphenate,french,nolists]{asmejour} %

\hypersetup{%
	pdfauthor={John H. Lienhard},                       		   	
	pdftitle={ASME Journal Paper LaTeX Template},                  	
	pdfkeywords={ASME journal paper, LaTeX template, BibTeX style, asmejour class},
	pdfsubject = {Describes the asmejour LaTeX template},			
	pdflicenseurl={https://ctan.org/pkg/asmejour},
}

\usepackage{amsmath}
\usepackage{xcolor}
\usepackage{graphicx}
\usepackage{tikz}
\usetikzlibrary{arrows.meta,calc}
\usetikzlibrary{arrows.meta}
\newtheorem{theorem}{Theorem}
\newtheorem{definition}{Definition}

\newtheorem{proposition}{Proposition}
\definecolor{blockblue}{RGB}{0,0,0}
\usetikzlibrary{arrows, positioning, calc}
\usetikzlibrary{arrows, positioning, calc, arrows.meta}
\usepackage{environ}

\makeatletter
\newsavebox{\measure@tikzpicture}
\NewEnviron{scaletikzpicturetowidth}[1]{%
	\def\tikz@width{#1}%
	\def\tikzscale{1}\begin{lrbox}{\measure@tikzpicture}%
        		\leavevmode\BODY
	\end{lrbox}%
	\pgfmathparse{#1/\wd\measure@tikzpicture}%
	\edef\tikzscale{\pgfmathresult}%
	\leavevmode\BODY
    }
\JourName{Dyn. Sys., Meas., Control}

\begin{document}



\SetAuthorBlock{Ali Hosseini\CorrespondingAuthor}{%
Department of Precision and Microsystems Engineering,\\
Delft University of Technology, \\
2628 CD Delft, The Netherlands \\
email: s.a.hosseini@tudelft.nl
}

\SetAuthorBlock{Dragan Kosti\'c}{%
Center of Competency,\\
ASMPT,\\
6641 TL Beuningen, The Netherlands \\
email: dragan.kostic@asmpt.com
}

\SetAuthorBlock{Hassan HosseinNia}{%
Department of Precision and Microsystems Engineering,\\
Delft University of Technology, \\
2628 CD Delft, The Netherlands \\
email: s.h.hosseinniakani@tudelft.nl
}

\title{Reliability Assessment and Performance Enhancement of Reset Control Systems}

\keywords{Mechatronics, Precision Motion Control, Frequency-Domain Analysis, Nonlinear Systems and Control}

\begin{abstract}
This paper develops a frequency-domain reliability assessment framework for reset control systems. The closed-loop higher-order sinusoidal-input describing function formulation is extended to explicitly include the reset-triggering signal generated through a shaping filter. Based on this signal, two metrics are introduced: \(\sigma_t\), which quantifies reset-time deviation, and \(\sigma_d\), which evaluates the tendency toward additional zero crossings. These metrics provide design-oriented indicators for identifying potentially unreliable reset behavior. To improve reset-triggering reliability, a first-order shaping filter is proposed for a generalized first-order reset element, increasing the low-frequency attenuation slope of the nonzero higher-order harmonics. The proposed analysis is evaluated on an industrial motion stage. The results show that the proposed metrics capture reliability issues that are not evident from the first-order closed-loop response alone and can therefore support the design of reset controllers with more reliable reset-triggering behavior.

\end{abstract}

\date{Version \versionno, \today}

\maketitle 
\begingroup
\renewcommand\thefootnote{}
\footnotetext{This project is co-financed by ASMPT and Holland High Tech, Top Sector High Tech Systems and Materials, through a PPS innovation grant for public-private collaboration in research and development.}
\addtocounter{footnote}{-1}
\endgroup

\section{Introduction}
\label{sec: Introduction}

Reset elements~\cite{clegg1958nonlinear,beker2004fundamental,guo2009frequency,ExpDemonst,ResetIntServo,secondOrderMarcel} are nonlinear filters that have been used to relax some of the fundamental performance limitations of linear time-invariant (LTI) control systems~\cite{waterbed2012,freudenberg2000surveyBodeGain}. Since reset control systems are nonlinear, their frequency-domain analysis cannot be performed using conventional LTI sensitivity functions alone. A widely used tool for this purpose is the sinusoidal-input describing function (SIDF), which characterizes the first harmonic of the reset-element response to a sinusoidal input~\cite{guo2009frequency}. To account for the non-sinusoidal response generated by reset actions, higher-order SIDFs (HOSIDFs) were later introduced in~\cite{saikumar2021loop}. These tools enable an approximate frequency-domain description of the higher-order harmonic content generated by reset elements.

For closed-loop reset control systems, the relation between the open-loop HOSIDFs of the reset element and the corresponding closed-loop frequency-domain response was studied in~\cite{saikumar2021loop}. This framework was further generalized in~\cite{LukeLure} by representing the closed-loop reset control system in a Lur'e-type form~\cite{khalil2002nonlinear}. This representation allows a broad class of interconnections between LTI elements and a reset element to be analyzed in a unified way. However, the formulation in~\cite{LukeLure} considers a zero-crossing reset element whose reset condition is directly associated with the reset-element input. It does not explicitly include the case where the reset instants are determined through an additional shaping filter~\cite{BandpassKarbasi}. Therefore, in this paper,
\begin{itemize}
    \item the Lur'e-based closed-loop HOSIDF formulation is extended to include a reset-triggering signal generated by a shaping filter.
\end{itemize}

The closed-loop HOSIDF-based prediction relies on the practical assumption that the reset action is primarily governed by the first harmonic of the reset-triggering signal~\cite[Assumption~2]{saikumar2021loop}. However, due to the nonlinear nature of the reset element, the reset-triggering signal may contain non-negligible higher-order harmonic components. These components can shift the actual zero crossings of the reset-triggering signal with respect to those predicted by the first harmonic. Consequently, the reset instants used implicitly in the frequency-domain prediction may differ from the reset instants that occur in the actual time-domain response. If these deviations become significant, the predicted closed-loop response may no longer be reliable.

A common practical check in reset-control design is to verify whether the reset-triggering signal exhibits two zero crossings within one period of a sinusoidal input~\cite{ZhangIdentifyTwo,hosseini2025AddOnFilterDesign}. If only two zero crossings occur, the reset behavior is often considered consistent with the first-harmonic-based prediction. However, this check does not quantify the deviation between the predicted and actual reset instants. In particular, the reset-triggering signal may still have two zero crossings per period while the actual reset instants are significantly shifted. To address this limitation,
\begin{itemize}
    \item this paper introduces a frequency-domain metric, denoted by \(\sigma_t\), that quantifies the reset-time deviation between the zero crossings predicted by the first harmonic and those obtained from the reconstructed reset-triggering signal, including higher-order harmonics.
\end{itemize}

Although \(\sigma_t\) measures the deviation of the reset instants, it does not directly indicate how close the reset-triggering signal is to producing additional zero crossings. This is important because multiple zero crossings within one period can lead to unintended reset actions and may invalidate the assumptions used in the closed-loop frequency-domain prediction. Therefore,
\begin{itemize}
    \item the \(\sigma_d\) metric, is introduced to quantify the tendency of the reset-triggering signal to generate additional zero crossings. The metric is constructed from the distance of selected stationary points of the reset-triggering signal to the zero level. In this way, \(\sigma_d\) provides an indication of whether a design is close to violating the two-reset condition, even when the actual number of reset events is still equal to two.
\end{itemize}

The proposed metrics provide diagnostic information that can be used during reset-controller design. In particular, they help identify cases where a controller may have favorable first-harmonic closed-loop behavior but unreliable reset-triggering behavior due to higher-order harmonic content. This motivates the use of shaping filters to reduce the higher-order harmonic contribution in the reset-triggering signal. Shaping filters have previously been used to attenuate undesired nonlinear contributions in reset elements~\cite{BandpassKarbasi}. However, their design may involve higher-order or fractional-order filters. In this work, we consider a simpler first-order shaping filter for a generalized first-order reset element (GFORE). The proposed filter is selected such that, for the considered GFORE structure, the low-frequency attenuation slope of the nonzero HOSIDFs increases. This result is used as a design mechanism to reduce low-frequency higher-order harmonic content in the reset-triggering signal.

The main contributions of this paper are therefore summarized as follows. First, the closed-loop HOSIDF formulation of~\cite{LukeLure} is extended to reset control systems in which the reset-triggering signal is generated through a shaping filter. Second, two frequency-domain metrics are introduced to assess the reliability of the reset instants: \(\sigma_t\), which measures reset-time deviation, and \(\sigma_d\), which measures the tendency toward additional zero crossings. Third, a first-order shaping filter is proposed for a GFORE element to reduce the low-frequency HOSIDF contribution and improve the reset-triggering reliability. Finally, the proposed analysis is evaluated on an industrial wire-bonder motion stage, where the frequency-domain predictions are compared with experimental results.

The remainder of this paper is organized as follows. Section~\ref{sec: priliminaries} introduces the reset element with a shaping filter and recalls the HOSIDF expressions used in the paper. Section~\ref{sec: closed-loop HOSIDF} extends the closed-loop HOSIDF formulation to include the reset-triggering signal. Section~\ref{sec: Metrics} introduces the proposed reliability metrics for reset instants and zero-crossing behavior. Section~\ref{sec: GFORE SF} presents the modified GFORE with a first-order shaping filter. Section~\ref{sec: case study} applies the proposed analysis to the wire-bonder motion stage and discusses the controller design. Section~\ref{sec: Experimental Validation} provides the experimental validation, and Section~\ref{sec: Conclusion} concludes the paper.

\section{Preliminaries}\label{sec: priliminaries}

\subsection{Reset Element}
The closed-loop reset control configuration, shown in Fig.~\ref{Fig: Block diagram CL}, involves exogenous signals 
\(r(t) \in \mathbb{R}\), \(d_i(t)\in \mathbb{R}\), \(d_n(t) \in \mathbb{R}\), and a system output \(y_{\mathrm{p}}(t) \in \mathbb{R}\), 
all considered for \(t \in \mathbb{R}_{\geq 0}\). The plant \(P\) is controlled through a set of LTI filters, namely \(C_{\text{pre}}\), \(C_{\text{par}}\), and \(C_{\text{pos}}\). In addition, the loop contains a reset component, \(\mathcal{R}\), whose dynamics are specified as follows:
\begin{equation}
\label{eq: reset state space}
\mathcal{R} := 
\begin{cases}
    \dot{x}_r(t) = A_r x_r(t) + B_r y(t), & \text{if } \left(x_r(t), q(t)\right) \notin \mathcal{F}, \\[5pt]
    x_r(t^+) = A_\rho x_r(t), & \text{if } \left(x_r(t), q(t)\right) \in \mathcal{F}, \\[5pt]
    u(t) = C_r x_r(t) + D_r y(t), &
\end{cases}
\end{equation}
where the state vector is \( x_r(t) \in \mathbb{R}^{n_r \times 1} \), and the post-reset state is \( x_r(t^{+}) \in \mathbb{R}^{n_r \times 1} \). The state-space matrices of the reset element are given by 
\( A_r \in \mathbb{R}^{n_r \times n_r} \), 
\( B_r \in \mathbb{R}^{n_r \times 1} \), 
\( C_r \in \mathbb{R}^{1 \times n_r} \), 
and \( D_r \in \mathbb{R} \). 
The reset matrix is denoted as 
\( A_\rho = \text{diag}(\gamma_1, \dots, \gamma_{n_r}) \), 
where \( -1 < \gamma_{i} < 1 \) for all \( i \in \mathbb{N} \). 
The signals \( y(t) \in \mathbb{R} \) and \( u(t) \in \mathbb{R} \) represent the input and output of the reset element, respectively.  
Let \(C_s\) represent an LTI shaping filter, which is commonly employed to attenuate undesired components in the reset signal or to shape the nonlinear behaviour of the reset element over a prescribed frequency range of interest~\cite{BandpassKarbasi,ComplexKarbasi}. The state-space realization of \(C_s\) is given by
\begin{equation}
\label{eq: Cs dynamic ss}
C_s :=
\begin{cases}
    \dot{x}_{\mathfrak{q}}(t) = A_{\mathfrak{q}} x_{\mathfrak{q}}(t) + B_{\mathfrak{q}} y(t), \\[5pt]
    q(t) = C_{\mathfrak{q}} x_{\mathfrak{q}}(t) + D_{\mathfrak{q}} y(t),
\end{cases}
\end{equation}
with $x_{\mathfrak{q}}\in \mathbb{R}^{p\times 1}$, \( A_{\mathfrak{q}} \in \mathbb{R}^{p \times p} \), 
\( B_{\mathfrak{q}} \in \mathbb{R}^{p \times 1} \), 
\( C_{\mathfrak{q}} \in \mathbb{R}^{1 \times p} \), 
and \( D_{\mathfrak{q}} \in \mathbb{R} \). The signal \( q(t) \in \mathbb{R} \), referred to as the reset-triggering signal, determines when the state \( x_r \) is reset according to the reset surface \( \mathcal{F} \), defined as  

\begin{equation}
\label{reset surface}
\mathcal{F} := \{ C_{\mathfrak{q}} x_{\mathfrak{q}}(t) + D_{\mathfrak{q}}y(t) = 0 \wedge (A_\rho - I) x_r(t) \neq 0 \}.
\end{equation}

\subsection{(Higher-Order) Sinusoidal Input Describing Functions of Reset Elements}
The inherent nonlinearity of the reset element implies that its steady-state response to a sinusoidal input is non-sinusoidal. Consequently, its frequency response analysis often relies on approximations such as the SIDF method \cite{guo2009frequency}. However, because the SIDF accounts only for the first harmonic of the output while neglecting higher-order components, it may result in significant inaccuracies. To overcome this limitation, HOSIDFs were introduced for reset elements in \cite{saikumar2021loop}, and later extended to reset elements with shaping filters in \cite{ComplexKarbasi}. 

Thus, having the input of the reset element as $y(t)=\hat{y}\sin(\omega t)$, the output $u(t)$ can be described by the Fourier series:
\begin{equation}
    \label{eq: e to y}
u(t)=\sum_{n=1}^{\infty}\left|{H_{n}}(\omega)\right|\hat{y}\sin\left(n\omega t +\angle {H_{n}}(\omega)\right),
\end{equation}
with $n\in \mathbb{N}$, and ${H_{1}}(\omega)$ is the SIDF and ${H_{1}}(\omega),\,\, n>2$ is the HOSIDFs of the reset element with shaping filter, which can be calculated as follows (see \cite[Section C]{ComplexKarbasi})
\begin{equation}
\label{eq: Hn}
H_{n}(\omega) =
\begin{cases}
C_r(A_r - j\omega I)^{-1}B_r\Theta_{\varphi}(\omega)\\
\qquad+ C_r(j\omega I - A_r)^{-1}B_r + D_r, & n=1, \\[6pt]
C_r(A_r - j\omega n I)^{-1}B_r\Theta_{\varphi}(\omega), & \text{odd } n > 2, \\[6pt]
0, & \text{even } n \ge 2,
\end{cases} \\[6pt]
\end{equation}
with
\begin{align}
\label{eq: HOSIDFs all functions}
\Theta_{\varphi}(\omega) &=
\frac{-2j\omega}{\pi}\,
\Omega(\omega)\Upsilon(\omega)\Lambda^{-1}(\omega) \nonumber\\[0pt]
\Upsilon(\omega)&=e^{j\varphi(\omega)}\Big(\omega I \cos\left(\varphi(\omega)\right) - A_r \sin\left(\varphi(\omega)\right)\Big)\nonumber\\
\Omega(\omega) &= \Delta(\omega) - \Delta(\omega)\Delta_{\rho}^{-1}(\omega) A_{\rho} \Delta(\omega)\nonumber \\[0pt]
\Lambda(\omega) &= \omega^{2}I + A_{r}^{2}\nonumber \\[0pt]
\Delta(\omega) &= I + e^{\tfrac{\pi}{\omega}A_{r}}\nonumber \\[0pt]
\Delta_{\rho}(\omega) &= I + A_{\rho} e^{\tfrac{\pi}{\omega}A_{r}}.\nonumber\\
& 
\end{align}
In the above equations, $\varphi(\omega) = \arg\left(C_s(j\omega)\right)$ denotes the phase shift in the reset signal relative to the case without a shaping filter ($C_s=1$). In the absence of a shaping filter, the reset occurs at the instants $t_k = \tfrac{k\pi}{\omega}$, where $\hat{y} \sin(\omega t) = 0$. With the shaping filter included, however, the reset occurs at $t_k = \tfrac{k\pi - \varphi}{\omega}$, where $q(t)=\hat{y} |C_s(j\omega)| \sin\left(\omega t + \varphi(\omega)\right) = 0$.

A reset element follows its base linear system (BLS) dynamics if no reset happens $\left((x_r(t), q(t)) \notin \mathcal{F}\,\,\, \forall\,\,t \in \mathbb{R}_{\geq 0}\right)$. Thus, the transfer function of its BLS is defined as follows
\begin{equation}
\label{eq RCS bls}
R_{\mathrm{bl}}(s) = C_r(sI - A_r)^{-1}B_r + D_r,
\end{equation}
where $s \in\mathbb{C}$ is the Laplace variable.

\usetikzlibrary {arrows.meta}
\tikzstyle{block} = [draw,thick, fill=white, rectangle, minimum height=2em, minimum width=2.5em, anchor=center]
\tikzstyle{sum} = [draw, fill=white, circle, minimum height=0.6em, minimum width=0.6em, anchor=center, inner sep=0pt]
\usetikzlibrary {arrows.meta}
\tikzstyle{block} = [draw,thick, fill=white, rectangle, minimum height=1.75em, minimum width=2.1875em, anchor=center]
\tikzstyle{block2} = [draw,thick, fill=white, rectangle, minimum height=0.8em, minimum width=1.0em, anchor=center]
\tikzstyle{sum} = [draw, fill=white, circle, minimum height=0.6em, minimum width=0.6em, anchor=center, inner sep=0pt]
\begin{figure}[t]
	\centering
	\begin{scaletikzpicturetowidth}{\linewidth}
		\begin{tikzpicture}[scale=\tikzscale]
			\node[coordinate](input) at (0,0) {};
			\node[sum] (sum1) at (1,0) {};
			\node[sum] (sum3) at (8.75,0) {+};
			\node[sum] (sum4) at (11,-2) {+};
			\node[sum, fill=black, minimum size=0.4em] (dot2) at (11,0) {};	
            \node[sum, fill=black, minimum size=0.0em] (dotCs) at (3.3,1.0) {};
            \node[sum, fill=black, minimum size=0.0em] (dotImag) at (4.4,1.7) {};
            \node[sum, fill=black, minimum size=0.0em] (dotImag2) at (4.88,1.7) {};
			\node[sum] (sumN) at (6.10,0) {+};
			\node[block] (lead) at (5.4,2.0) {$\mathcal{R}$};
            \node[block2] (Cs) at (3.9,1.0) {$C_{s}$};
\node[block] (C_par) at (4.75,0) {$C_{\text{par}}$};
            
			\node[block] (controller) at (7.3,0) {$C_{\text{pos}}$};
			\node[block] (fo-higs) at (2.18,0) {$C_{\text{pre}}$};
			\node[block] (system) at (10,0) {$P$};

            \node[sum, fill=black, minimum size=0.4em] (Dot_1) at (3.3,0) {};
			\node[coordinate](output) at (12,0) {};
			\node[coordinate](di-input) at (8.75,1) {};
			\node[coordinate](n-input) at (12,-2) {};
			\draw[arrows = {-Latex[width=6pt, length=6pt]}] (input)  -- node[above]{$r$} (sum1);
			\draw[arrows = {-Latex[width=6pt, length=6pt]}] (di-input)node[above]{$d_i$}  --  (sum3);
			\draw[arrows = {-Latex[width=6pt, length=6pt]}] (n-input)  -- node[above]{$d_n$} (sum4);
			\draw[arrows = {-Latex[width=6pt, length=6pt]}] (sum1)   --node[above]{$e$}  (fo-higs);
			\draw[ = {-Latex[width=6pt, length=6pt]}] (fo-higs) --node [above]{$\,y$} (Dot_1);
            \draw[ = {-Latex[width=6pt, length=6pt]}, dash pattern=on 1.1pt off 0.9pt] (dotCs) --node [above]{} (Cs);
            \draw[  = {-Latex[width=6pt, length=6pt]}, dash pattern=on 1.1pt off 0.9pt] (Cs) -|node [above]{} (dotImag);
            \draw[arrows={-Latex[width=4pt, length=4pt]}, dash pattern=on 1.1pt off 0.9pt] 
  (dotImag) -- node [below]{$q$}(dotImag2);
			\draw[arrows = {-Latex[width=6pt, length=6pt]}] (lead)  -| (sumN) node[above, xshift=+6pt,yshift=+10pt] {$u$};

            \draw[arrows = {-Latex[width=6pt, length=6pt]}] (C_par)  --node[above] {} (sumN);
            
			\draw[arrows = {-Latex[width=6pt, length=6pt]}] (controller)  --node[above] {} (sum3);
			\draw[arrows = {-Latex[width=6pt, length=6pt]}] (sum3) -- (system);
			\draw[arrows = {-Latex[width=6pt, length=6pt]}] (system)  -- node[above]{$y_{\mathrm{p}}$} (output);
			\draw[arrows = {-Latex[width=6pt, length=6pt]}] (dot2) -- (sum4);

\draw[arrows = {-Latex[width=6pt, length=6pt]}] (sumN) -- (controller);
            
			\draw[arrows = {-Latex[width=6pt, length=6pt]}] (sum4) -| node[pos=0.9,left]{$-$} (sum1);

            \draw[arrows = {-Latex[width=6pt, length=6pt]}] (Dot_1) |-node[pos=0.85,left]{} (lead);

            \draw[arrows = {-Latex[width=6pt, length=6pt]}] (Dot_1) |-node[pos=0.85,left]{} (C_par);

            \draw[black,line width=0.6pt]
    ($(lead.south west)+(+0.085,0.09)$)
    -- ++(0.07,0)
    -- ++(0,0.25)
    -- ++(0.12,0)
    -- ++(0,-0.25)
    -- ++(0.07,0);
			
		\end{tikzpicture}
	\end{scaletikzpicturetowidth}
	\caption{Block diagram of the closed-loop reset control system.}
	\label{Fig: Block diagram CL}	
\end{figure}

\section{Closed-loop Frequency Response in the Presence of a Shaping Filter}
\label{sec: closed-loop HOSIDF}

Due to the nonlinear element in the feedback loop, closed-loop frequency-domain functions, such as the sensitivity and complementary sensitivity functions, no longer satisfy the conventional open-loop/closed-loop relations of LTI systems. Analytical frequency-response expressions for closed-loop reset control systems were introduced in~\cite{saikumar2021loop}. This framework was later generalized in~\cite{LukeLure} to a broader class of reset control structures, including configurations with pre-, post-, and parallel filters, as shown in Fig.~\ref{Fig: Block diagram CL}. However, the formulation in~\cite{LukeLure} does not explicitly account for the presence of a shaping filter in the reset-triggering path.

In this work, the reset-triggering signal \(q(t)\) is studied using frequency-domain tools in order to assess the nonlinear behavior and reliability of the reset action. In particular, we focus on cases where \(q(t)\) exhibits a strong nonlinear contribution, which may lead to undesirable or unreliable reset instants. To mitigate this effect, a shaping-filter-based approach is proposed. Therefore, we first extend the closed-loop frequency-response framework of~\cite{LukeLure} to structures that include the shaping filter \(C_s\).

Following~\cite{LukeLure}, all LTI components of the closed-loop system are collected into a single LTI block, which is placed in feedback with the reset element. This representation is closely related to the well-known Lur'e configuration~\cite{khalil2002nonlinear}. Since any interconnection of LTI elements with a single reset element can be rewritten in this form, it provides a convenient and general framework for closed-loop frequency-domain analysis. The corresponding Lur'e-type representation of the closed-loop system in Fig.~\ref{Fig: Block diagram CL} is shown in Fig.~\ref{fig: Lure CL}.

In this representation, the LTI block \(G\) is placed in feedback with the reset element \(\mathcal{R}\). The external input \(w\in\mathbb{R}\) and the reset-element output \(u\) are the two inputs of \(G\). The signal \(z\in\mathbb{R}\) denotes a generic performance output of interest, such as the tracking error, control input, or plant output. Furthermore, \(y(t)\) denotes the input to the reset element, as defined in~\eqref{eq: reset state space}, whereas \(q(t)\) denotes the reset-triggering signal generated through the shaping filter. Hence, in contrast to the formulation without a shaping filter, the reset element receives \(y(t)\), while its reset instants are determined by the zero crossings of \(q(t)\).

The input-output relation of the LTI block can therefore be written as
\begin{equation}
\begin{bmatrix}
Z(s)\\
Y(s)\\
Q(s)
\end{bmatrix}
=
\hat{G}(s)
\begin{bmatrix}
W(s)\\
U(s)
\end{bmatrix},
\label{eq: G_IO_shaping}
\end{equation}
where
\begin{equation}
\hat{G}(s):=
\begin{bmatrix}
G(s)\\
G_{wq}(s) \quad G_{uq}(s)
\end{bmatrix},
\qquad
G(s):=
\begin{bmatrix}
G_{wz}(s) & G_{uz}(s)\\
G_{wy}(s) & G_{uy}(s)
\end{bmatrix}.
\label{eq: Gprime_matrix_shaping}
\end{equation}
Here, \(G(s)\) corresponds to the original LTI block relating \(W(s)\) and \(U(s)\) to \(Z(s)\) and \(Y(s)\), whereas the additional row in \(\hat{G}(s)\) describes the transfer paths from \(W(s)\) and \(U(s)\) to the reset-triggering signal \(Q(s)\). Hence, the shaping filter is embedded in \(\hat{G}(s)\) through \(G_{wq}(s)\) and \(G_{uq}(s)\). When \(C_s(s)=1\), the reset-triggering signal coincides with the reset-element input, i.e., \(q(t)=y(t)\), and the formulation reduces to the standard zero-crossing reset configuration considered in~\cite{LukeLure}. The extended representation in~\eqref{eq: G_IO_shaping}-\eqref{eq: Gprime_matrix_shaping} enables a direct extension of the closed-loop HOSIDF framework in~\cite{LukeLure}. In particular, the closed-loop frequency-domain relations are still derived through the reset-element input \(y(t)\), while the additional output channel \(q(t)\) enables an explicit frequency-domain characterization of the reset-triggering signal. Hence, besides predicting the performance output \(z(t)\), the proposed formulation also characterizes the harmonic content of \(q(t)\), which is central to the reliability analysis considered in this work. The corresponding closed-loop HOSIDFs are given next.

\begin{theorem}
\label{thm:closed_loop_shaping}
Consider the closed-loop reset control system in Fig.~\ref{fig: Lure CL}, where the extended LTI block $\hat{G}$ is defined in~\eqref{eq: G_IO_shaping}--\eqref{eq: Gprime_matrix_shaping}. Let the system be subject to a sinusoidal input
\begin{equation}
w(t)=\hat{w}\sin(\omega t+\phi_w),
\end{equation}
with $\hat{w}\in\mathbb{R}{>0}$, $\phi_w\in\mathbb{R}$, and $\omega\in\mathbb{R}{>0}$. Assume that the closed-loop system is sinusoidal-input convergent, such that $u(t)$, $y(t)$, $q(t)$, and $z(t)$ converge to unique zero-mean periodic solutions with period $2\pi/\omega$. Furthermore, assume that the reset element with shaping filter is characterized by the SIDF/HOSIDFs $H_n(\omega)$, as defined in~\eqref{eq: Hn}, and that the closed-loop harmonic components are evaluated under the harmonic-separator approximation used in HOSIDF-based closed-loop analysis. Then, the HOSIDF-based approximation of the reset-element input is given by
\begin{equation}
y(t)
=
\hat{w}
\sum_{n=1}^{\infty}
|S_{y,n}(\omega)|
\sin\!\left(n(\omega t+\phi_w)+\angle S_{y,n}(\omega)\right),
\label{eq: y_periodic_shaping}
\end{equation}
where
\begin{equation}
S_{y,1}(\omega)
:=
\frac{G_{wy}(j\omega)}
{1-G_{uy}(j\omega)H_1(\omega)},
\label{eq: Sy1_shaping}
\end{equation}
\begin{equation}
S_{y,n}(\omega)
:=
\frac{
G_{uy}(nj\omega)H_n(\omega)S_{y,1}(\omega)
e^{j(n-1)\angle S_{y,1}(\omega)}
}
{1-G_{uy}(nj\omega)R_{\mathrm{bl}}(nj\omega)}\,\,\,\forall n\neq 1.
\label{eq: Syn_shaping}
\end{equation}
Moreover, the HOSIDF-based approximation of the reset-triggering signal is given by
\begin{equation}
q(t)
=
\hat{w}
\sum_{n=1}^{\infty}
|S_{q,n}(\omega)|
\sin\!\left(n(\omega t+\phi_w)+\angle S_{q,n}(\omega)\right),
\label{eq: q_periodic_shaping}
\end{equation}
with
\begin{equation}
S_{q,1}(\omega)
:=
G_{wq}(j\omega)
+
G_{uq}(j\omega)H_1(\omega)S_{y,1}(\omega),
\label{eq: Sq1_shaping}
\end{equation}
\begin{equation}
S_{q,n}(\omega)
:=
\frac{
G_{uq}(nj\omega)H_n(\omega)S_{y,1}(\omega)
e^{j(n-1)\angle S_{y,1}(\omega)}
}
{1-G_{uy}(nj\omega)R_{\mathrm{bl}}(nj\omega)}\,\,\,\forall n\neq 1.
\label{eq: Sqn_shaping}
\end{equation}
Finally, the HOSIDF-based approximation of the performance output is given by
\begin{equation}
z(t)
=
\hat{w}
\sum_{n=1}^{\infty}
|S_{z,n}(\omega)|
\sin\!\left(n(\omega t+\phi_w)+\angle S_{z,n}(\omega)\right),
\label{eq: z_periodic_shaping}
\end{equation}
where
\begin{equation}
S_{z,1}(\omega)
:=
G_{wz}(j\omega)
+
G_{uz}(j\omega)H_1(\omega)S_{y,1}(\omega),
\label{eq: Sz1_shaping}
\end{equation}
\begin{equation}
S_{z,n}(\omega)
:=
\frac{
G_{uz}(nj\omega)H_n(\omega)S_{y,1}(\omega)
e^{j(n-1)\angle S_{y,1}(\omega)}
}
{1-G_{uy}(nj\omega)R_{\mathrm{bl}}(nj\omega)}\,\,\,\forall n\neq 1.
\label{eq: Szn_shaping}
\end{equation}
\end{theorem}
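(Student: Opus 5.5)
The plan is to follow the harmonic-separation argument of~\cite{LukeLure} (itself built on~\cite{saikumar2021loop}). The only new ingredient is the extra output channel $Q(s)$ of $\hat{G}$, which does not feed back into the loop.

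\textbf{Step 1: set up the harmonic balance.} Under sinusoidal-input convergence, write the steady-state periodic signals $y,u,q,z$ as Fourier series in the harmonics $n\omega$. Under the HOSIDF approximation (Assumption~2 of~\cite{saikumar2021loop}), the reset instants are fixed by the first harmonic of the triggering signal. With the shaping filter these instants are the zero crossings of $C_s$ applied to the first harmonic of $y$. Accordingly, $H_n(\omega)$ in~\eqref{eq: Hn} already encodes the phase shift $\varphi(\omega)=\arg C_s(j\omega)$.

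\textbf{Step 2: split the reset-element output.} Following the harmonic-separator construction, decompose the reset-element response to $y$ into two parts:
\begin{itemize}
\item the part generated by the reset action acting on the first harmonic $y_1$, which yields $H_n$-weighted harmonics at $n\omega$ with phase $n\angle y_1$;
\item the part generated by the higher harmonics $y_n$, $n\ge2$, which (since reset instants are dictated by $y_1$ alone) pass only through the BLS and contribute $R_{\mathrm{bl}}(nj\omega)Y_n$.
\end{itemize}
For the reset-generated part, write $y_1(t)=\hat{w}|S_{y,1}|\sin(\omega t+\phi_w+\angle S_{y,1})$. Using~\eqref{eq: e to y} with the time shift, the $n$-th harmonic has complex amplitude $\hat{w}\,H_n(\omega)\,|S_{y,1}|\,e^{jn\angle S_{y,1}}$, which equals $\hat{w}\,H_n S_{y,1}e^{j(n-1)\angle S_{y,1}}$ relative to phase $n\phi_w$. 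This explains the factor $e^{j(n-1)\angle S_{y,1}}$.

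\textbf{Step 3: close the loop at $n=1$.} At the fundamental, $Y_1=G_{wy}W+G_{uy}H_1Y_1$, which gives~\eqref{eq: Sy1_shaping}.

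\textbf{Step 4: close the loop at $n\ge2$.} For $n\ge2$, $W$ has no component, so
\begin{equation*}
Y_n=G_{uy}(nj\omega)\left[H_nS_{y,1}e^{j(n-1)\angle S_{y,1}}\hat{w}+R_{\mathrm{bl}}(nj\omega)Y_n\right].
\end{equation*}
Solving this linear equation gives~\eqref{eq: Syn_shaping}. The total reset-element output at harmonic $n$ is therefore $U_n=\hat{w}H_nS_{y,1}e^{j(n-1)\angle S_{y,1}}/(1-G_{uy}R_{\mathrm{bl}})$.

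\textbf{Step 5: read off the open channels $q$ and $z$.} Since these channels do not feed back, $Q_n=G_{wq}(nj\omega)W_n+G_{uq}(nj\omega)U_n$ and $Z_n=G_{wz}(nj\omega)W_n+G_{uz}(nj\omega)U_n$. With $W_n=0$ for $n\ge2$ and $U_1=H_1S_{y,1}\hat w$, this yields~\eqref{eq: Sq1_shaping}--\eqref{eq: Szn_shaping}. Converting complex amplitudes back to sinusoids gives~\eqref{eq: y_periodic_shaping},~\eqref{eq: q_periodic_shaping} and~\eqref{eq: z_periodic_shaping}.

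\textbf{Main obstacle.} The delicate step is justifying, within the approximation, the following two facts:
\begin{itemize}
\item the reset instants depend only on the first harmonic of $q$, which equals $C_s(j\omega)Y_1$ when $q$ is produced by $C_s$ from $y$. This consistency is what makes $H_n$ with phase shift $\varphi$ applicable, and it is exactly the assumption whose violation the later metrics $\sigma_t$ and $\sigma_d$ quantify;
\item the higher harmonics of $y$ see only the BLS.
\end{itemize}
Both are taken as the harmonic-separator assumption stated in the theorem, exactly as in~\cite{LukeLure}. One then only needs to check that $G_{wq}$ and $G_{uq}$ enter linearly and do not alter Steps 3--4.
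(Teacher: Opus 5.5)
Your proposal is correct and matches the paper's proof. The paper also takes the $y$ and $z$ expressions from the existing Lur'e-based harmonic-separator analysis, which you re-derive in Steps 2--4 rather than cite. It then obtains $S_{q,1}$ and $S_{q,n}$ exactly as in your Step 5, from $Q_1=G_{wq}(j\omega)W+G_{uq}(j\omega)H_1(\omega)Y_1$ and $Q_n=G_{uq}(nj\omega)U_n$, using the fact that the extra output row does not feed back into the loop. Your remark that $H_n(\omega)$ applies only because the first harmonic of $q$ equals $C_s(j\omega)Y_1$ states a consistency condition the paper leaves implicit, but it does not change the argument.
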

\textbf{Proof:} See Appendix~\ref{sec: App A}.

Theorem~\ref{thm:closed_loop_shaping} is adopted from the closed-loop HOSIDF framework presented in~\cite{LukeLure} and is generalized here for reset control systems that include shaping filters. In this extension, the closed-loop frequency-domain relations remain structurally the same as in~\cite{LukeLure}, while the SIDF/HOSIDFs \(H_n(\omega)\) are replaced by those corresponding to the reset element with the shaping filter. Furthermore, the reset-triggering signal \(q(t)\) is explicitly included as an additional output channel of the extended LTI block \(\hat{G}(s)\). For a detailed discussion of the assumptions used in the theorem, including sinusoidal-input convergence, the harmonic-separator approximation, and the general interpretation of closed-loop HOSIDF-based frequency-response analysis, the reader is referred to~\cite{LukeLure}.

\begin{figure}[!t]
\centering
\resizebox{0.30\textwidth}{!}{%
\begin{tikzpicture}[
     x=1cm,y=1cm,
    every node/.style={font=\large},
    wire/.style={draw=black,line width=0.8pt,line cap=butt,line join=miter},
   sig/.style={wire,-{Latex[length=3.0mm,width=3mm]}},
    block/.style={
        draw=blockblue,
        line width=1.75pt,
        fill=white,
        minimum width=1.35cm,
        minimum height=1.00cm,
        inner sep=0pt,
        font=\Large
    },
    smallblock/.style={
        draw=blockblue,
        line width=1.55pt,
        fill=white,
        minimum width=0.90cm,
        minimum height=0.70cm,
        inner sep=0pt,
        font=\large
    },
    hatbox/.style={
        draw=blockblue,
        dashed,
        line width=1.20pt
    }
]

\node[block]      (G)  at (3.15, 2.20) {\huge{$G$}};
\node[smallblock] (Cs) at (1.3, 1.4) {\Large{$C_s$}};
\node[block]      (R)  at (3.15,-0.25) {\huge{$\mathcal{R}$}};
\draw[black,line width=0.75pt]
    ($(R.north west)+(+0.085,-0.4)$)
    -- ++(0.07,0)
    -- ++(0,0.25)
    -- ++(0.12,0)
    -- ++(0,-0.25)
    -- ++(0.07,0);

\draw[hatbox,gray] (0.65,0.75) rectangle (4.45,2.9);
\node[font=\Large] at (4.2,3.37) {\huge{$\hat{G}$}};

\coordinate (Gz) at ($(G.west)+(0, 0.20)$);
\coordinate (Gy) at ($(G.west)+(0,-0.20)$);
\coordinate (Gw) at ($(G.east)+(0, 0.20)$);
\coordinate (Gu) at ($(G.east)+(0,-0.20)$);

\coordinate (Ru)   at ($(R.west)+(0, 0.20)$);
\coordinate (Ry)   at ($(R.west)+(0,-0.20)$);
\coordinate (Rout) at (R.east);

\coordinate (zout)    at (-0.4, 2.40);
\coordinate (win)     at ( 5.2, 2.40);
\coordinate (ubus)    at ( 5.20, 2.00);
\coordinate (yleft)   at (-0.25, 2.00);
\coordinate (ybottom) at (-0.25,-0.46);

\coordinate (csinup)  at (2.2,2);
\coordinate (csin)    at (2.2,1.4);
\coordinate (vleft)   at (0.20,1.4);
\coordinate (vdown)   at (0.20,-0.05);


\draw[sig] (Gz) -- (zout);
\draw[sig] (win) -- (Gw);

\node[above=2pt] at (0.4,2.40) {\large{$z$}};
\node[above=2pt] at ( 5.0,2.40) {\large{$w$}};

\draw[sig] (Gy) -- (yleft) -- (ybottom) -- (Ry);
\node[above=2pt] at (0.4,1.85) {\large{$y$}};

\draw[sig,dash pattern=on 1.2pt off 1.0pt] (csinup) -- (csin) -- (Cs.east);
\draw[sig,dash pattern=on 1.2pt off 1.0pt] (Cs.west) -- (vleft) -- (vdown) -- (Ru);

\node[above=2pt] at (0.4,1.3) {\large{$q$}};

\draw[sig] (Rout) -- (5.20,-0.25) -- (ubus) -- (Gu);
\node[above=2pt] at (5.0,1.9) {\large{$u$}};

\end{tikzpicture}%
}
\caption{Reset structure with the shaping filter $C_s$ included in the augmented plant $\hat{G}$.}
\label{fig: Lure CL}
\end{figure}
\section{Reliability Metrics for Reset-Triggering Signal}
\label{sec: Metrics}

In the previous section, the closed-loop frequency-domain response of reset control systems in the presence of a shaping filter was derived. The connection between the HOSIDFs of the reset element and the corresponding closed-loop HOSIDFs was first introduced in~\cite{saikumar2021loop} using a harmonic-separation approximation. In this approximation, only the first harmonic component of the reset-element input,
\begin{equation}
y_1(t)
=
\hat{w}
|S_{y,1}(\omega)|
\sin\!\left(\omega t+\phi_w+\angle S_{y,1}(\omega)\right),
\label{eq: y1}
\end{equation}
is assumed to enter the nonlinear reset mechanism; see also~\cite[Fig.~1]{LukeLure}. Consequently, in the considered architecture with a shaping filter, only the first harmonic component of the reset-triggering signal,
\begin{equation}
q_1(t)
=
\hat{w}
|S_{q,1}(\omega)|
\sin\!\left(\omega t+\phi_w+\angle S_{q,1}(\omega)\right),
\label{eq: q1}
\end{equation}
is assumed to determine the reset instants.

However, the actual reset-triggering signal is \(q(t)\), which also contains higher-order harmonic components. From~\eqref{eq: q1}, \(q_1(t)\) is a pure sinusoidal signal and therefore has exactly two zero crossings over one fundamental period \(2\pi/\omega\), provided that \(|S_{q,1}(\omega)|\neq 0\). Hence, the closed-loop frequency-domain prediction is derived based on the reset instants associated with these two zero crossings. If the higher-order harmonics in \(q(t)\) are sufficiently small, the actual reset instants remain close to those predicted from \(q_1(t)\), and the closed-loop HOSIDF-based prediction is expected to remain reliable.

First, let \(N_r\) denote the number of reset events occurring within one period of the sinusoidal input. This is defined as the number of distinct zero crossings of the reset-triggering signal \(q(t)\) over the interval \(t\in[0,2\pi/\omega)\), i.e.,
\begin{equation}
\label{eq: nr}
N_r
:=
\#\left\{
t_k\in\left[0,\frac{2\pi}{\omega}\right)
:
q(t_k,\omega)=0
\right\},
\qquad
N_r\in\mathbb{N},\quad N_r\geq 2.
\end{equation}
For the first-harmonic approximation \(q_1(t)\), it follows that \(N_r=2\), provided that \(|S_{q,1}(\omega)|\neq 0\). Therefore, the closed-loop frequency-domain prediction is based on the assumption that two reset events occur within each fundamental period.

In contrast, two scenarios may reduce the reliability of both the frequency-domain performance prediction and the designed reset controller itself. First, the actual reset-triggering signal \(q(t)\) may still satisfy \(N_r=2\), but the corresponding reset instants may deviate from the zero crossings predicted by \(q_1(t)\). In this case, the number of reset events remains consistent with the assumed model, but the reset instants are shifted. Second, the higher-order harmonics may significantly alter the shape of \(q(t)\), resulting in \(N_r>2\). This violates the two-reset condition implicitly used in the closed-loop frequency-domain prediction, potentially leading to unreliable or degraded performance. The following subsections analyze these two scenarios and introduce metrics to quantify the corresponding reset-instant reliability.

\subsection{Reliability of Reset Instants}
\label{subsec: reset_time_deviation}

Using the closed-loop representation of \(q(t)\) in~\eqref{eq: q_periodic_shaping}, the reset-triggering signal can be reconstructed at each excitation frequency \(\omega\). Since, under the considered HOSIDF representation, the reset-triggering signal contains only odd harmonics, the signal has half-wave symmetry. Therefore, it is sufficient to analyze the deviation around one of the two nominal reset instants within a period; the same analysis applies to the other reset instant shifted by \(T/2\), where \(T=2\pi/\omega\).

Let \(t_k\) denote a nominal reset instant predicted by the first harmonic \(q_1(t)\), i.e.,
\begin{equation}
q_1(t_k)=0.
\end{equation}
Equivalently, for
\[
q_1(t)
=
\hat{w}|S_{q,1}(\omega)|
\sin\!\left(\omega t+\phi_w+\angle S_{q,1}(\omega)\right),
\]
the nominal reset instants are given by
\begin{equation}
t_k
=
\frac{k\pi-\phi_w-\angle S_{q,1}(\omega)}{\omega},
\qquad k\in\mathbb{N}.
\label{eq: nominal_reset_time}
\end{equation}
In contrast, the actual reset instants are determined by the zero crossings of the full reset-triggering signal \(q(t)\). Around the nominal reset instant \(t_k\), define the set of actual reset instants as
\begin{equation}
\mathcal{T}_k^{*}(\omega)
:=
\left\{
t_{k,i}^{*}\in
\left[t_k-\frac{T}{4},\,t_k+\frac{T}{4}\right)
:
q(t_{k,i}^{*},\omega)=0
\right\}.
\label{eq: actual_reset_set}
\end{equation}
If \(N_r=2\), then \(\mathcal{T}_k^{*}(\omega)\) contains one element, denoted by \(t_{k,1}^{*}\). If \(N_r=2p\) with \(p>1\), then multiple zero crossings occur around each nominal reset instant, and we write
\begin{equation}
\mathcal{T}_k^{*}(\omega)
=
\left\{
t_{k,1}^{*},t_{k,2}^{*},\ldots,t_{k,p}^{*}
\right\}.
\end{equation}

\begin{definition}
\label{Def: sigma_t}
The reset-time deviation metric is defined as
\begin{equation}
\label{eq: sigma_t}
\sigma_t(\omega)
=
\frac{|\Delta_t(\omega)|}{T/2}\times 100 .
\end{equation}
For the case \(N_r=2\), where only one actual reset instant exists around \(t_k\), the deviation is given by
\begin{equation}
\Delta_t(\omega)
=
t_k-t_{k,1}^{*}.
\label{eq: delta_t_two_resets}
\end{equation}
For the case \(N_r>2\), the deviation is defined as the maximum spread between the nominal reset instant and the actual reset instants around it, namely
\begin{equation}
\Delta_t(\omega)
=
\max\!\left(\{t_k\}\cup\mathcal{T}_k^{*}(\omega)\right)
-
\min\!\left(\{t_k\}\cup\mathcal{T}_k^{*}(\omega)\right).
\label{eq: delta_t_multiple_resets}
\end{equation}
\end{definition}

The normalization by \(T/2\) expresses the deviation relative to the nominal time interval between two consecutive reset events predicted by \(q_1(t)\). Hence, for \(N_r=2\), \(\sigma_t(\omega)\) quantifies the shift between the predicted and actual reset instants. For \(N_r>2\), it quantifies the spread of the multiple actual reset instants around the nominal one. Therefore, \(\sigma_t(\omega)\) should be interpreted together with \(N_r\): a small value of \(\sigma_t(\omega)\) indicates that the actual reset action remains close to the assumed timing, whereas a large value indicates a significant deviation from the reset instants used in the closed-loop frequency-domain prediction. 

To illustrate how \(\sigma_t(\omega)\) captures deviations in the reset instants, Fig.~\ref{fig: scenarios} shows three representative cases in which the actual reset-triggering signal \(q(t)\) deviates from its first harmonic approximation \(q_1(t)\). These examples are only intended for visualization and are not associated with a specific reset control system. Fig.~\ref{fig: q_case_1} represents a case where the higher-order harmonic content is negligible, and the zero crossings of \(q(t)\) remain close to those of \(q_1(t)\). Fig.~\ref{fig: q_case_2} shows a case with stronger higher-order harmonic content, where the actual reset instants are shifted, while the number of reset events remains \(N_r=2\). Therefore, in both cases, \(\Delta_t\) can be computed according to~\eqref{eq: delta_t_two_resets}. In contrast, Fig.~\ref{fig: q_case_3} illustrates a case where the higher-order harmonics are sufficiently large to introduce multiple zero crossings. In this case, \(N_r=6\), meaning that three actual zero crossings of \(q(t)\) occur around each nominal zero crossing of \(q_1(t)\). Consequently, \(\Delta_t\) can be computed using~\eqref{eq: delta_t_multiple_resets}, which accounts for the worst-case spread between the nominal and actual reset instants.

\begin{figure}[!t]
\centering
\subfloat[]{%
\includegraphics[width=0.8\columnwidth]{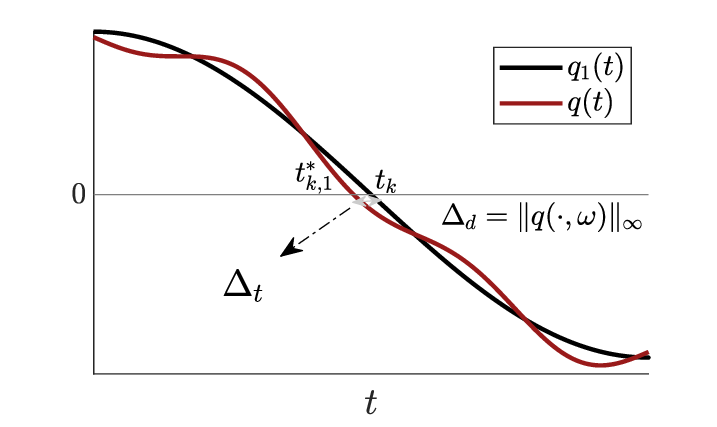}%
\label{fig: q_case_1}}
\hfil
\subfloat[]{%
\includegraphics[width=0.8\columnwidth]{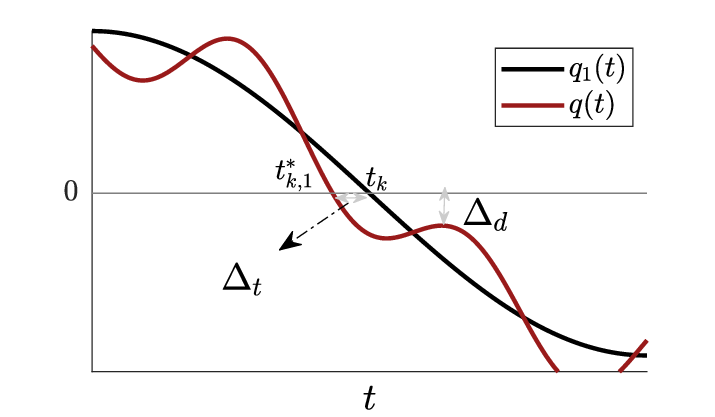}%
\label{fig: q_case_2}}
\hfil
\subfloat[]{%
\includegraphics[width=0.8\columnwidth]{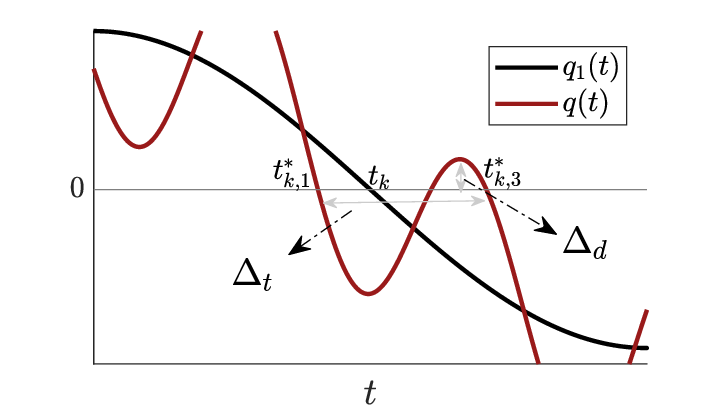}%
\label{fig: q_case_3}}
\caption{Reset-triggering signal \(q(t)\) and its first harmonic approximation \(q_1(t)\):  
(a) negligible higher-order harmonic content,  
(b) significant higher-order harmonic content without additional zero crossings, i.e., \(N_r=2\),  
(c) strong higher-order harmonic content resulting in multiple zero crossings, i.e., \(N_r=6\).}
\label{fig: scenarios}
\end{figure}

\subsection{Reliability of the Number of Zero Crossings}
\label{subsec: reliability_zero_crossings}

The condition \(N_r>2\) indicates a direct violation of the two-reset assumption used in the closed-loop frequency-domain prediction. In this case, the reset element is no longer triggered according to the zero crossings implied by the first-harmonic approximation \(q_1(t)\). Instead, the reset instants are determined by the full reset-triggering signal \(q(t)\), whose higher-order harmonic content has become sufficiently dominant to introduce additional zero crossings; see Fig.~\ref{fig: q_case_3}. The number of zero crossings has therefore been used in the literature as a practical indicator for evaluating whether a reset controller is properly designed~\cite{ZhangIdentifyTwo,hosseini2025AddOnFilterDesign}.

However, satisfying \(N_r=2\) does not necessarily imply reliable reset behavior. As illustrated in Fig.~\ref{fig: q_case_2}, the reset-triggering signal may still have only two zero crossings per period while containing significant higher-order harmonic content. In such cases, \(q(t)\) can be close to producing additional zero crossings, and a small increase in the nonlinear contribution may lead to the situation shown in Fig.~\ref{fig: q_case_3}. This is undesirable not only because it reduces the reliability of the predicted closed-loop frequency-domain response, but also because multiple unintended reset events can significantly degrade the actual closed-loop performance. This behavior is conceptually similar to the gain-loss phenomenon observed in HIGS-based systems when excessive switching between modes occurs~\cite{GainLossHIGS}.

Therefore, besides checking whether \(N_r=2\), it is useful to quantify how close the reset-triggering signal is to producing additional zero crossings. To this end, a zero-crossing reliability metric \(\sigma_d(\omega)\) is introduced. Let
\begin{equation}
\|q(\cdot,\omega)\|_{\infty}
:=
\max_{t\in[0,T)}
|q(t,\omega)|,
\qquad
T=\frac{2\pi}{\omega}.
\label{eq: q_inf_norm}
\end{equation}
For the case \(N_r=2\), define the set of critical points as
\begin{equation}
\mathcal{E}_{2}(\omega)
:=
\left\{
t\in[0,T):
\dot{q}(t,\omega)=0,\;
q(t,\omega)\ddot{q}(t,\omega)>0
\right\}.
\label{eq: E2_set}
\end{equation}
The points in \(\mathcal{E}_{2}(\omega)\) correspond to local extrema directed toward the zero level. Therefore, their distance to zero indicates how close \(q(t)\) is to generating additional zero crossings while still satisfying \(N_r=2\); see Fig.~\ref{fig: q_case_2}.

For the case \(N_r>2\), the additional zero crossings are evaluated locally around each nominal reset instant \(t_k\). Let
\begin{equation}
\mathcal{I}_k(\omega)
:=
\left[
\min\!\left(\{t_k\}\cup\mathcal{T}_k^{*}(\omega)\right),
\max\!\left(\{t_k\}\cup\mathcal{T}_k^{*}(\omega)\right)
\right],
\label{eq: Ik_interval}
\end{equation}
where \(\mathcal{T}_k^{*}(\omega)\) denotes the set of actual zero crossings of \(q(t)\) associated with the nominal reset instant \(t_k\), as defined in~\eqref{eq: actual_reset_set}. Then, the relevant critical points associated with the additional zero-crossing loop are defined as
\begin{equation}
\label{eq: Egt2_set}
\begin{aligned}
\mathcal{E}_{>2,k}(\omega)
:=
\bigl\{
t\in\mathcal{I}_k(\omega)\; \big|\;
\dot{q}(t,\omega)=0,
&q(t,\omega)\ddot{q}(t,\omega)<0,\\
&
\dot{q}_1(t,\omega)\ddot{q}(t,\omega)>0
\bigr\}.
\end{aligned}
\end{equation}
The restriction to \(\mathcal{I}_k(\omega)\) ensures that only the extrema associated with the additional zero crossings around the nominal reset instant \(t_k\) are considered, while unrelated extrema elsewhere in the period are excluded; see Fig.~\ref{fig: q_case_3}. The sign conditions used in \(\mathcal{E}_{2}(\omega)\) and \(\mathcal{E}_{>2,k}(\omega)\) are further discussed in Appendix~\ref{app: stationary_point_conditions}.

The distance measure \(\Delta_d(\omega)\) is then defined as
\begin{equation}
\Delta_d(\omega)
=
\begin{cases}
\displaystyle
\min_{t\in\mathcal{E}_{2}(\omega)}
|q(t,\omega)|, 
& N_r=2,\;\mathcal{E}_{2}(\omega)\neq\emptyset, \\[8pt]
\displaystyle
\|q(\cdot,\omega)\|_{\infty}, 
& N_r=2,\;\mathcal{E}_{2}(\omega)=\emptyset, \\[8pt]
\displaystyle
\max_{t\in\mathcal{E}_{>2,k}(\omega)}
|q(t,\omega)|, 
& N_r>2.
\end{cases}
\label{eq: Delta_d}
\end{equation}
For cases such as Fig.~\ref{fig: q_case_1}, where no critical stationary point exists for \(N_r=2\), the convention \(\Delta_d(\omega)=\|q(\cdot,\omega)\|_{\infty}\) is used. This corresponds to the most reliable case in the proposed metric.

\begin{definition}
\label{Def: sigma_d}
The zero-crossing reliability metric is defined as
\begin{equation}
\sigma_d(\omega)
=
\begin{cases}
\displaystyle
\left(
1-
\frac{\Delta_d(\omega)}
{\|q(\cdot,\omega)\|_{\infty}}
\right)\times 100, & N_r=2, \\[12pt]
\displaystyle
\left(
1+
\frac{\Delta_d(\omega)}
{\|q(\cdot,\omega)\|_{\infty}}
\right)\times 100, & N_r>2.
\end{cases}
\label{eq: sigma_d}
\end{equation}
\end{definition}

For \(N_r=2\), the metric satisfies \(0\leq\sigma_d(\omega)\leq100\). A small value indicates that the reset-triggering signal is far from producing additional zero crossings, whereas a value close to \(100\) indicates that \(q(t)\) is close to touching the zero level at an additional point. Therefore, even though the two-reset condition is not yet violated, the reset behavior is close to becoming unreliable. For \(N_r>2\), the metric satisfies \(\sigma_d(\omega)>100\), explicitly indicating that additional zero crossings have already occurred. In this case, larger values of \(\sigma_d(\omega)\) correspond to a stronger violation of the two-reset condition.

In the numerical evaluation, the reconstructed signal \(q(t,\omega)\) is obtained by truncating the harmonic expansion in \eqref{eq: q_periodic_shaping} up to \(n=n_{\max}\). The zero crossings and stationary points are then computed over one fundamental period using a dense time grid, followed by local root refinement. Increasing \(n_{\max}\) beyond 1000 did not change the reported values of \(N_r\), \(\sigma_t\), or \(\sigma_d\) over the considered frequency range.

The metrics introduced in this section provide a frequency-domain means to assess whether the reset-triggering signal remains consistent with the assumptions underlying the closed-loop HOSIDF prediction. In particular, large values of \(\sigma_t(\omega)\) or \(\sigma_d(\omega)\) indicate that the higher-order harmonic content of \(q(t)\) may lead to shifted or additional reset events. These observations motivate the design of shaping filters that directly reduce the low-frequency contribution of the HOSIDFs and thereby improve the reliability of the reset-triggering mechanism. In the next section, this idea is developed for a modified GFORE equipped with a first-order shaping filter.
\section{Modified GFORE With a First-Order Shaping Filter}
\label{sec: GFORE SF}

As discussed earlier, the shaping filter \(C_s\) can attenuate undesired components in the reset-element input \(y(t)\), thereby generating a smoother reset-triggering signal \(q(t)\). Moreover, the phase of the shaping filter, \(\varphi(\omega)=\angle C_s(j\omega)\), appears explicitly in the SIDF and HOSIDF expressions in~\eqref{eq: Hn}--\eqref{eq: HOSIDFs all functions}. This indicates that an appropriate choice of \(C_s\) can reduce the higher-order harmonic content generated by the reset element and, consequently, improve the reset-instant reliability metrics introduced in Section~\ref{sec: Metrics}.

In this section, we introduce a modified GFORE equipped with a first-order shaping filter. The shaping filter is selected such that the low-frequency contribution of the higher-order harmonics is further attenuated. In particular, the proposed choice increases the low-frequency slope of the nonzero HOSIDFs from \(+40\) dB/dec to \(+80\) dB/dec.

\begin{proposition}
\label{Pro: Shaping Filter}
Consider the reset element \(\mathcal{R}\) in~\eqref{eq: reset state space} as a first-order reset element with
\[
A_r=-\omega_r,\qquad B_r=1,\qquad C_r=\omega_r,\qquad D_r\in\mathbb{R},
\]
where \(\omega_r\in\mathbb{R}_{>0}\), and let \(A_\rho=\gamma\in(-1,1)\). Consider the first-order shaping filter
\begin{equation}
\label{eq: shaping Filter}
C_s:
\begin{cases}
\dot{x}_{\mathfrak{q}}(t)
=
-\dfrac{\omega_1\omega_r}{\omega_1+\omega_r}x_{\mathfrak{q}}(t)
+
\dfrac{\omega_1\omega_r}{\omega_1+\omega_r}y(t),
\\[8pt]
q(t)
=
\dfrac{\omega_1}{\omega_1+\omega_r}x_{\mathfrak{q}}(t)
+
\dfrac{\omega_r}{\omega_1+\omega_r}y(t),
\end{cases}
\end{equation}
with \(\omega_1\in\mathbb{R}_{>0}\). Then, for the nonzero higher-order harmonics, i.e., for odd \(n\geq 3\), the HOSIDFs satisfy
\begin{equation}
|H_n(\omega)|=\mathcal{O}(\omega^4),
\qquad \omega\to 0.
\end{equation}
In contrast, without the shaping filter, i.e., for \(C_s(s)=1\), the corresponding low-frequency behavior is
\begin{equation}
|H_n(\omega)|=\mathcal{O}(\omega^2),
\qquad \omega\to 0.
\end{equation}
Hence, the proposed shaping filter increases the low-frequency attenuation slope of the nonzero HOSIDFs from \(+40\) dB/dec to \(+80\) dB/dec.
\end{proposition}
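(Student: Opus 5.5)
The plan is to specialize the HOSIDF expressions \eqref{eq: Hn}--\eqref{eq: HOSIDFs all functions} to the scalar case \(A_r=-\omega_r\), \(B_r=1\), \(C_r=\omega_r\), \(A_\rho=\gamma\), and then read off the low-frequency order of each factor. For odd \(n\geq 3\) we have
\[
H_n(\omega)=\frac{\omega_r}{-\omega_r-jn\omega}\,\Theta_\varphi(\omega),
\]
and the prefactor tends to \(-1\) as \(\omega\to 0\), uniformly bounded for each fixed \(n\). Hence everything reduces to the behavior of \(\Theta_\varphi(\omega)=\frac{-2j\omega}{\pi}\,\Omega\Upsilon\Lambda^{-1}\).

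\textbf{Step 1: the slow factors.} Since \(e^{\frac{\pi}{\omega}A_r}=e^{-\pi\omega_r/\omega}\) vanishes faster than any power of \(\omega\) as \(\omega\to 0\), we get \(\Delta\to 1\) and \(\Delta_\rho\to 1\). It follows that \(\Omega\to 1-\gamma\neq 0\) and \(\Lambda=\omega^2+\omega_r^2\to\omega_r^2\), all up to exponentially small or \(\mathcal{O}(\omega^2)\) corrections. Consequently
\[
|\Theta_\varphi(\omega)|=\frac{2(1-\gamma)}{\pi\omega_r^2}\,\omega\,\bigl|\omega\cos\varphi(\omega)+\omega_r\sin\varphi(\omega)\bigr|\,\bigl(1+o(1)\bigr),
\]
using \(|e^{j\varphi}|=1\). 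The whole question is therefore the order of the term \(\omega\cos\varphi+\omega_r\sin\varphi=\cos\varphi\,(\omega+\omega_r\tan\varphi)\).

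\textbf{Step 2: the case without shaping filter.} For \(C_s=1\) we have \(\varphi\equiv 0\), so the bracket equals \(\omega\) and \(|H_n|=\mathcal{O}(\omega^2)\). This is the \(+40\) dB/dec slope.

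\textbf{Step 3: the case with the filter \eqref{eq: shaping Filter}.} I would compute \(C_s(s)=D_{\mathfrak q}+C_{\mathfrak q}(s-A_{\mathfrak q})^{-1}B_{\mathfrak q}\) with \(a:=\omega_1\omega_r/(\omega_1+\omega_r)\). Combining over a common denominator gives
\[
C_s(s)=\frac{\omega_r(s+\omega_1)}{(\omega_1+\omega_r)(s+a)},
\]
a lead/lag with zero at \(\omega_1\) and pole at \(a\), and positive DC gain. Its phase is \(\varphi=\arctan(\omega/\omega_1)-\arctan(\omega/a)\), so
\[
\tan\varphi=\frac{\omega(1/\omega_1-1/a)}{1+\omega^2/(\omega_1 a)}.
\]
The key identity is \(1/\omega_1-1/a=-1/\omega_r\). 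Substituting,
\[
\omega+\omega_r\tan\varphi=\omega\left(1-\frac{1}{1+\omega^2/(\omega_1a)}\right)=\frac{\omega^3/(\omega_1 a)}{1+\omega^2/(\omega_1 a)}=\mathcal{O}(\omega^3).
\]
Since \(\cos\varphi\to 1\), Step 1 then yields \(|\Theta_\varphi|=\mathcal{O}(\omega^4)\), and hence \(|H_n|=\mathcal{O}(\omega^4)\). The leading coefficient \(2(1-\gamma)/(\pi\omega_r^2\omega_1 a)\) is nonzero, so the bound is sharp and the slope is exactly \(+80\) dB/dec.

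\textbf{Main obstacle.} The only delicate point is this cancellation. One must verify that the filter's phase matches \(-\arctan(\omega/\omega_r)\) to first order, so that \(\omega\cos\varphi+\omega_r\sin\varphi\) loses its linear term. The filter parameters were chosen precisely so that the linear term cancels; the remaining steps are routine bookkeeping to confirm that the exponentially small terms in \(\Delta\) and \(\Delta_\rho\) and the \(\mathcal{O}(\omega^2)\) corrections in \(\Lambda\) and the prefactor do not spoil the order.
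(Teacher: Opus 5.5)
Your proposal is correct and takes essentially the same route as the paper. Both specialize $\Theta_\varphi$ to the scalar case and use $\Omega\to 1-\gamma$ and $\Lambda\to\omega_r^2$. Both then obtain $\omega\cos\varphi+\omega_r\sin\varphi=\mathcal{O}(\omega^3)$ from the exact expression for $\tan\varphi$; your identity $1/\omega_1-1/a=-1/\omega_r$ is the same cancellation the paper gets from its closed form of $\zeta(\omega)$. Your added remark that the leading coefficient $2(1-\gamma)/(\pi\omega_r^2\omega_1 a)$ is nonzero slightly sharpens the paper's big-$\mathcal{O}$ statement, since it shows the slope is exactly $+80$ dB/dec rather than only bounded by it.
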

\textbf{Proof:} See Appendix~\ref{app: proof shaping filter}.

The improved low-frequency attenuation of the nonzero HOSIDFs reduces the higher-order harmonic content generated by the reset element in the low-frequency range. This is particularly relevant for reset-instant reliability, since violations of the two-reset condition, reflected by \(N_r>2\) and large values of \(\sigma_d(\omega)\), may occur when higher-order harmonic contributions become dominant. Therefore, the proposed shaping filter helps reduce the tendency of the reset-triggering signal to generate additional zero crossings at low frequencies.

The next section examines this effect in a closed-loop setting. In particular, the proposed modified GFORE is implemented in the considered reset control architecture, and the resulting reset-triggering behavior is evaluated using the reliability metrics introduced earlier. This provides a bridge between the analytical low-frequency result derived in this section and the subsequent case study and experimental validation.

\section{Case Study}
\label{sec: case study}

The case study is carried out on an industrial wire-bonder motion platform. Wire bonders are widely used in semiconductor manufacturing to establish electrical connections between an integrated circuit and its package terminals. The isolated motion stage considered in this work is shown in Fig.~\ref{fig: Wire-bonder}. The platform provides three translational degrees of freedom along the X-, Y-, and Z-directions, each driven by a dedicated actuator.

In this study, the analysis is focused on the X-stage dynamics. To assess the degree of coupling between the motion axes, Fig.~\ref{Fig: FRF AB383} shows the measured frequency response functions (FRFs) from the actuator forces \(F_{\mathrm{x}}\), \(F_{\mathrm{y}}\) and \(F_{\mathrm{z}}\) to the measured X-direction displacement \(D_{\mathrm{x}}\). The cross-axis responses from \(F_{\mathrm{y}}\) to \(D_{\mathrm{x}}\) and from \(F_{\mathrm{z}}\) to \(D_{\mathrm{x}}\) are considerably smaller than the direct response from \(F_{\mathrm{x}}\) to \(D_{\mathrm{x}}\). In particular, their magnitudes remain approximately \(40~\mathrm{dB}\) below the direct X-axis response (around and below the targeted bandwidth frequency), corresponding to a difference of about two orders of magnitude. This indicates that the coupling from the Y- and Z-actuation channels to the X-displacement is negligible for the purpose of the present analysis. Therefore, the dynamics along the X-, Y-, and Z-axes can be treated as approximately decoupled SISO systems.

For confidentiality, the frequency axis in Fig.~\ref{Fig: FRF AB383}, as well as in all subsequent frequency-domain results, is normalized with respect to the Nyquist frequency. Specifically, frequencies in Hz are normalized by $F_s/2$, where $F_s$ denotes the sampling frequency, while angular frequencies are normalized by the Nyquist angular frequency $2\pi F_s/2$. To provide a practical indication of the physical frequency range, the sampling frequency of the experimental setup is on the order of $10~\mathrm{kHz}$.

\begin{figure}[!t]
\centering
\includegraphics[width=0.7\columnwidth]{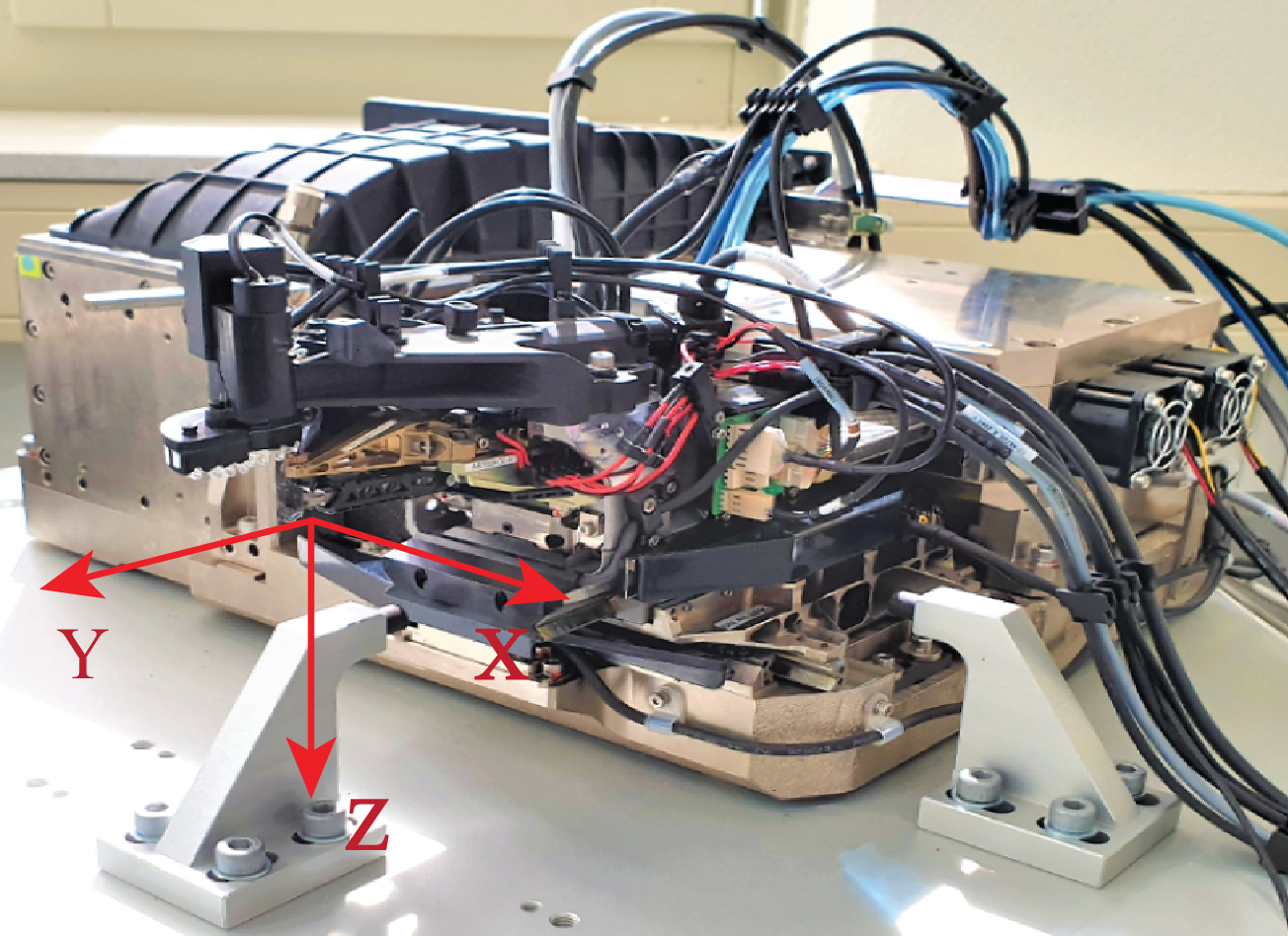}
\caption{Isolated XYZ-motion platform of the wire bonder.}
\label{fig: Wire-bonder}
\end{figure}

\begin{figure}[!t]
\centering
\includegraphics[width=\columnwidth]{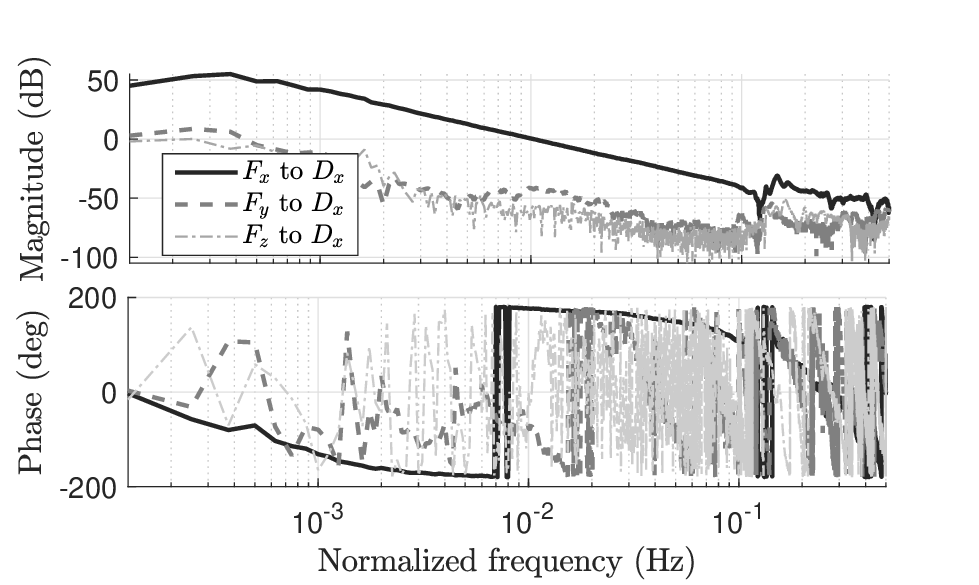}
\caption{Measured FRFs from the actuator forces \(F_{\mathrm{x}}\), \(F_{\mathrm{y}}\) and \(F_{\mathrm{z}}\) to the X-direction displacement \(D_{\mathrm{x}}\) of the wire-bonder motion stage.}
\label{Fig: FRF AB383}
\end{figure}

\subsection{Linear and Reset-Based Controller Design}
\label{subsec: controller_design}

In this section, a linear controller is first designed for the introduced motion stage. Subsequently, it is shown how reset control can be used to further improve the closed-loop performance, and how the proposed reliability metrics can support the reset-controller design process.

Let \(C_{\mathrm{L}}(s)\) denote the baseline LTI controller. The controller consists of a PID-type structure combined with a notch filter to attenuate the high-frequency resonance of the plant, and is given by
\begin{equation}
\label{eq: CL_controller}
C_{\mathrm{L}}(s)
=
k_p
\left(1+\frac{\omega_i}{s}\right)
\left(
\frac{1+\frac{\omega_d}{s}}
{1+\frac{\omega_t}{s}}
\right)
\left(
\frac{
1+\frac{s}{Q_1\omega_n}+\frac{s^2}{\omega_n^2}
}{
1+\frac{s}{Q_2\omega_n}+\frac{s^2}{\omega_n^2}
}
\right),
\end{equation}
where \(k_p\in\mathbb{R}_{>0}\) is the proportional gain, \(\omega_i\in\mathbb{R}_{>0}\) is the integral corner frequency, \(\omega_d\in\mathbb{R}_{>0}\) and \(\omega_t\in\mathbb{R}_{>0}\) define the lead-filter zero and pole frequencies, respectively, and \(\omega_n\in\mathbb{R}_{>0}\) denotes the notch frequency. The parameters \(Q_1,Q_2\in\mathbb{R}_{>0}\) determine the damping characteristics of the notch filter. 

For the X-stage plant \(P\), the baseline LTI controller is designed subject to constraints on the sensitivity peak below and around the resonance region. Specifically,
\begin{equation}
\label{eq: constraints}
M_{\mathrm{s}}\leq 6~\mathrm{dB},
\qquad
M_{\mathrm{r}}\leq 2.5~\mathrm{dB},
\end{equation}
where
\begin{align*}
\label{eq: Ms_Mr_def}
&M_{\mathrm{s}}
:=
20\log_{10}
\left(
\max_{\omega<\omega_{\mathrm{res}}}
|S(j\omega)|
\right),\\
&M_{\mathrm{r}}
:=
20\log_{10}
\left(
\max_{\omega\geq\omega_{\mathrm{res}}}
|S(j\omega)|
\right).
\end{align*}
Here,
\(
S(j\omega)=1/(1+C_{\mathrm{L}}(j\omega)P(j\omega))
\)
denotes the sensitivity function, and \(\omega_{\mathrm{res}}\in\mathbb{R}_{>0}\) is the frequency immediately preceding the first resonance or anti-resonance of the plant. The parameters of \(C_{\mathrm{L}}\), selected to maximize the closed-loop bandwidth while satisfying constraints in~\eqref{eq: constraints}, are reported in Table~\ref{tab:controller_parameters}.

The achievable bandwidth of the baseline LTI controller is mainly limited by the available phase margin. This limitation is caused, in part, by the delay present in the system. To increase the bandwidth without substantially changing the gain characteristics, a reset-based filter is introduced to provide additional phase. The adopted structure follows the reset-filter architecture introduced in~\cite[Definition~1]{hosseini2025AddOnFilterDesign} and is implemented according to the closed-loop configuration in Fig.~\ref{Fig: Block diagram CL}.

To this end, a high-bandwidth LTI controller \(C_{\mathrm{L}}'(s)\), with the same structure as~\eqref{eq: CL_controller}, is first considered. This controller is selected to increase the bandwidth, but by itself it does not provide sufficient phase margin and therefore leads to \(M_{\mathrm{s}}>6~\mathrm{dB}\). The reset-based controller \(C_{\mathrm{R}}\) is then obtained by choosing
\begin{equation}
C_{\mathrm{pos}}(s)=C_{\mathrm{L}}'(s),
\qquad
C_{\mathrm{par}}(s)=0,
\qquad
C_{\mathrm{pre}}(s)=k_{\mathrm{c}}C_{\mathfrak{c}}(s),
\end{equation}
where
\begin{equation}
\label{eq: kc}
k_{\mathrm{c}}
=
\frac{\omega_f-\omega_l}{\omega_f},
\qquad
C_{\mathfrak{c}}(s)
=
\frac{1+s/\omega_l}{1+s/\omega_f},
\end{equation}
with \(\omega_l,\omega_f\in\mathbb{R}_{>0}\) and \(\omega_l<\omega_f\). The interval \([\omega_l,\omega_f]\) specifies the frequency range in which the reset-based filter is intended to provide additional phase.

The reset element \(\mathcal{R}\) is chosen as a GFORE element with \(n_r=1\), given by
\begin{equation}
\label{eq: Dr}
A_r=-\omega_r,
\qquad
B_r=1,
\qquad
C_r=\omega_r,
\qquad
D_r=
\frac{\omega_l}{\omega_f-\omega_l},
\end{equation}
where
\begin{equation}
\label{eq: corner frequency}
\omega_r
=
\frac{\omega_l}
{\sqrt{
1+
\left(
\dfrac{4(1-A_\rho)}
{\pi(1+A_\rho)}
\right)^2
}}.
\end{equation}
The resulting controller parameters for the baseline LTI controller and the reset-based controller are summarized in Table~\ref{tab:controller_parameters}.

\begin{table}[!t]
\centering
\caption{Controller parameters for the baseline LTI controller \(C_{\mathrm{L}}\) and the reset-based controller \(C_{\mathrm{R}}\). Frequencies are reported in normalized angular-frequency units (rad/s).}
\label{tab:controller_parameters}
\renewcommand{\arraystretch}{1.15}
\begin{tabular}{c|c|c}
\hline
\textbf{Parameter} & \(\boldsymbol{C_{\mathrm{L}}}\) & \(\boldsymbol{C_{\mathrm{R}}}\) \\
\hline
\(k_p\)        & $1.129$ & $1.99$ \\
\(\omega_i\)  & $2.480 \times 10^{-2}$ & $3.272 \times 10^{-2}$ \\
\(\omega_d\)  & $3.927 \times 10^{-2}$ & $5.236 \times 10^{-2}$ \\
\(\omega_t\)  & $3.534 \times 10^{-1}$ & $4.712 \times 10^{-1}$ \\
\(\omega_n\)  & $9.975 \times 10^{-1}$ & $9.975 \times 10^{-1}$ \\
\(Q_1\)        & $4$ & $4$ \\
\(Q_2\)        & $3.5$ & $3.5$ \\
\hline
\(\omega_l\)  & -- & $8.639 \times 10^{-2}$ \\
\(\omega_f\)  & -- & $3.685 \times 10^{-1}$ \\
\(\omega_r\)  & -- & $5.336 \times 10^{-2}$ \\
\(A_\rho\)    & -- & $0$ \\
\hline
\end{tabular}
\end{table}

We first consider the reset-based controller without a shaping filter, i.e.,
\(C_s(s)=1\). Fig.~\ref{fig:sensitivity_noCs} compares the sensitivity function
of the baseline LTI controller with the first-order closed-loop sensitivity of
the reset-based controller. For the reset-based controller, the closed-loop
HOSIDFs from the reference input \(r\) to the error output \(e\) are obtained
by setting \(w=r\) and \(z=e\) in~\eqref{eq: Sz1_shaping} and
\eqref{eq: Szn_shaping}. The corresponding first- and third-order components
are denoted by \(S_{e,1}(\omega)\) and \(S_{e,3}(\omega)\), respectively.

As shown in Fig.~\ref{fig:sensitivity_noCs}, the reset-based controller increases
the closed-loop bandwidth compared with the baseline LTI controller. This improves
the low-frequency performance, for example in terms of reference tracking and
disturbance rejection, while still satisfying the sensitivity constraints
in~\eqref{eq: constraints}. However, \(S_{e,1}\) only describes the first
harmonic component of the closed-loop response. The third-order component
\(S_{e,3}\) shows that, at some frequencies, the higher-order harmonic content
is not negligible compared with the first harmonic. This may reduce the
reliability of the predicted reset instants and, consequently, lead to actual
closed-loop performance that differs from the frequency-domain prediction.

To assess this effect, the reset-instant reliability metrics \(\sigma_t\) and
\(\sigma_d\) are evaluated for controller $C_{\mathrm{R}}$. Fig.~\ref{fig:metrics_noCs}(a)
shows \(\sigma_t\), indicating that, at certain frequencies, the actual zero
crossings of the reset-triggering signal deviate from the zero crossings assumed
in the first-harmonic prediction by up to approximately \(30\%\) of the nominal
inter-reset interval. Fig.~\ref{fig:metrics_noCs}(b) shows \(\sigma_d\). The
frequency regions where \(\sigma_d>100\) correspond to cases with multiple zero
crossings, i.e., \(N_r>2\). These results indicate that, although the reset-based
controller improves the predicted first-harmonic performance, the reset instants
are not sufficiently reliable over the full frequency range. Hence, the predicted
improvement may not be fully achieved in practice. This demonstrates the relevance
of the proposed metrics for evaluating the reliability of a designed reset
controller.

To reduce the undesired nonlinear contribution and improve the reliability of
the reset-triggering signal, the shaping-filter design proposed in
Proposition~\ref{Pro: Shaping Filter} is applied. Using the value of
\(\omega_r\) reported in Table~\ref{tab:controller_parameters}, the shaping
filter is selected to increase the low-frequency attenuation of the higher-order
harmonics. The resulting controller, denoted by \(C_{\mathrm{R,sf}}\), has the
same structure and parameters as \(C_{\mathrm{R}}\), except that the reset
element is equipped with the shaping filter \(C_s\) in
Proposition~\ref{Pro: Shaping Filter}. In this design,
\(\omega_r=0.0533~\mathrm{rad/s}\) and
\(\omega_1=0.062~\mathrm{rad/s}\).

The effect of the shaping filter can be observed in
Fig.~\ref{fig:sensitivity_noCs}. The first-order sensitivity
\(S_{e,1}(\omega)\) remains nearly unchanged for \(C_{\mathrm{R}}\) and
\(C_{\mathrm{R,sf}}\), indicating that the dominant closed-loop response is
preserved. In contrast, the third-order sensitivity is reduced for
\(C_{\mathrm{R,sf}}\), with the low-frequency attenuation behavior predicted by
Proposition~\ref{Pro: Shaping Filter}. This reduction in the higher-order
harmonic contribution improves the reliability of the reset-triggering signal.

To further assess this improvement, the metrics \(\sigma_t\) and \(\sigma_d\)
are evaluated for \(C_{\mathrm{R,sf}}\) and compared with the case without the
shaping filter in Fig.~\ref{fig:metrics_noCs}. The results show that
\(C_{\mathrm{R,sf}}\) reduces the reset-time deviation measured by
\(\sigma_t\). Moreover, the multiple-zero-crossing issue is eliminated over the
considered frequency range, as \(N_r=2\) for all studied frequencies. This is
also reflected in the improved values of \(\sigma_d\), which remain below the
violation threshold associated with \(N_r>2\).

These results demonstrate the practical relevance of the proposed reliability
metrics in the reset-controller design process. In particular, the metrics
provide frequency-domain indicators for identifying reset controllers that may
produce unreliable reset instants and, consequently, degraded time-domain
performance. Therefore, in the next section, the three controllers
\(C_{\mathrm{L}}\), \(C_{\mathrm{R}}\), and \(C_{\mathrm{R,sf}}\) are implemented
on the wire-bonder motion stage to validate the frequency-domain observations
in time-domain experiments.

\begin{figure}[!t]
\centering
\includegraphics[width=\columnwidth]{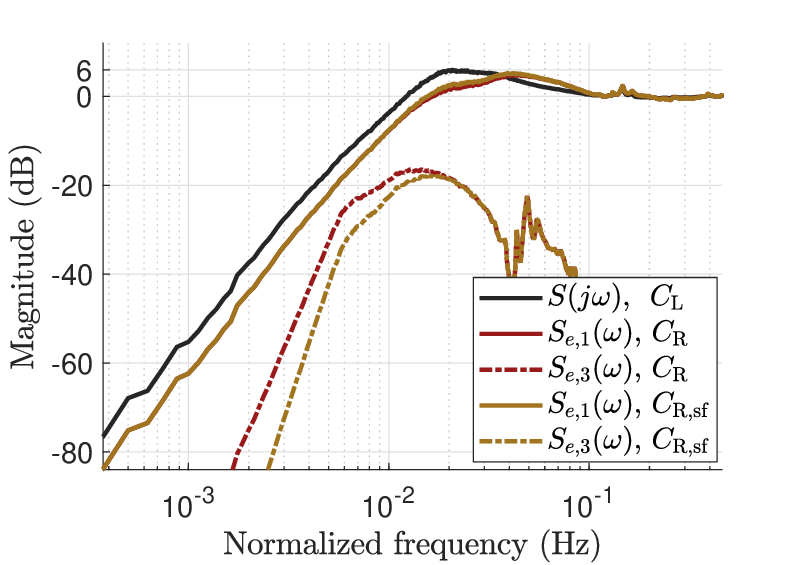}
\caption{Sensitivity comparison for the baseline LTI controller and the reset-based controller with and without the shaping filter. The first- and third-order closed-loop HOSIDFs from \(r\) to \(e\), denoted by \(S_{e,1}\) and \(S_{e,3}\), are also shown for the reset-based controllers.}
\label{fig:sensitivity_noCs}
\end{figure}

\begin{figure}[!t]
\centering
\subfloat[]{%
\includegraphics[width=1\columnwidth]{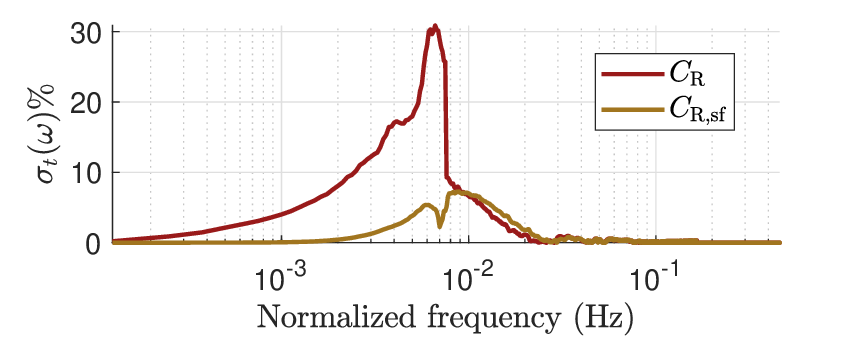}%
\label{fig:sigma_t_noCs}}
\hfil
\subfloat[]{%
\includegraphics[width=0.99\columnwidth]{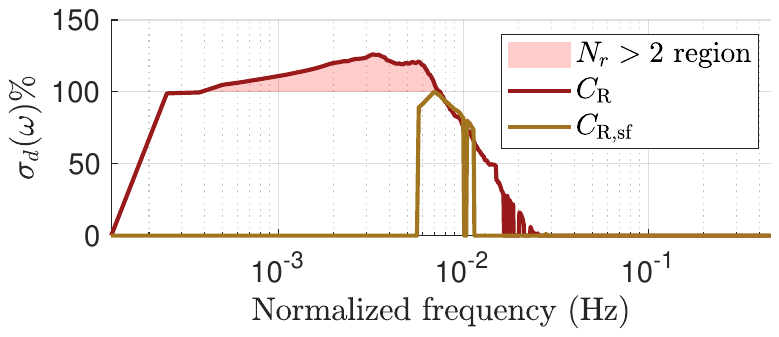}%
\label{fig:sigma_d_noCs}}
\caption{Reliability metrics for the reset-based controller with and without the shaping filter:  
(a) reset-time deviation metric \(\sigma_t\),  
(b) zero-crossing reliability metric \(\sigma_d\). Regions with \(\sigma_d>100\) indicate multiple zero crossings, i.e., \(N_r>2\).}
\label{fig:metrics_noCs}
\end{figure}

\section{Experimental Validation}\label{sec: Experimental Validation}

The frequency-domain observations from the case study are validated experimentally on the X-stage of the wire-bonder motion platform. The three controllers \(C_{\mathrm{L}}\), \(C_{\mathrm{R}}\), and \(C_{\mathrm{R,sf}}\) are implemented and tested under sinusoidal reference excitations of the form $r(t)=\hat{r}\sin(2\pi f_{r,i}t)$, with
\begin{equation}
f_{r,1}=0.003125\, \mathrm{Hz},
\qquad
f_{r,2}=0.00625\, \mathrm{Hz}.
\end{equation}

The excitation frequencies are selected from the range where the reliability metrics clearly distinguish between reset-based controllers with and without the shaping filter. The reference amplitude \(\hat{r}\) is not reported explicitly, since the time-domain signals are normalized for confidentiality. Nevertheless, to provide physical context, the reference amplitudes are on the order of a few hundred micrometers, while the measured tracking errors are in the sub-micrometer range, which is consistent with the accuracy requirements of the considered precision motion stage.

Fig.~\ref{fig:exp_error} shows the measured tracking error \(e(t)\) for the three controllers, and the corresponding RMS values are reported in Table~\ref{tab:rms_error_exp}. For \(f_{r,1}\), the reset-based controllers \(C_{\mathrm{R}}\) and \(C_{\mathrm{R,sf}}\) reduce the RMS error by approximately \(49.4\%\) and \(50.0\%\), respectively, compared with the baseline LTI controller. For \(f_{r,2}\), the corresponding reductions are approximately \(36.5\%\) and \(41.0\%\). Hence, both reset-based controllers improve the measured tracking performance. It should be emphasized, however, that the objective here is not to optimize the tracking-performance difference between \(C_{\mathrm{R}}\) and \(C_{\mathrm{R,sf}}\), but rather to compare their reset-triggering reliability.

Thus, Fig.~\ref{fig:exp_q} compares the measured reset-triggering signal \(q(t)\) for
\(C_{\mathrm{R}}\) and \(C_{\mathrm{R,sf}}\). For both excitation frequencies,
the shaping filter leads to a visibly smoother reset-triggering signal. This
behavior is consistent with the frequency-domain observations in the design
section and with the reliability metrics shown in Fig.~\ref{fig:metrics_noCs},
where the inclusion of the shaping filter reduced the higher-order harmonic
contribution and eliminated the occurrence of multiple zero crossings over the
considered frequency range. Consequently, the reset events generated by
\(C_{\mathrm{R,sf}}\) are more consistent with the first-harmonic reset
assumption used in the closed-loop HOSIDF analysis. These experimental results
support the main conclusion of the case study: the proposed reliability metrics
can identify reset designs with potentially unreliable reset-triggering behavior.

All controllers are implemented digitally. The LTI components are discretized using the Tustin approximation, which is adopted to preserve the phase characteristics of the continuous-time dynamics over the relevant frequency range~\cite{aastrom2013computerTustin}. Accordingly, the frequency-domain results reported for the LTI parts are evaluated using their discrete-time frequency responses. The reset element is discretized using the same approach; details of the resulting discrete-time GFORE implementation can be found in~\cite[Section~V.B]{hosseini2025AddOnFilterDesign}. The closed-loop stability of the implemented reset controllers was verified using the frequency-domain stability assessment in~\cite{dastjerdi2023frequency}. All experiments were performed within the verified stable operating regime. A complete stability synthesis for reset control systems is beyond the scope of this paper.

\begin{table}[!t]
\centering
\caption{Measured RMS tracking error for sinusoidal reference-tracking experiments.}
\label{tab:rms_error_exp}
\renewcommand{\arraystretch}{1.15}
\resizebox{\columnwidth}{!}{%
\begin{tabular}{c|c|c|c}
\hline
\textbf{Frequency (Hz)} &
\(\boldsymbol{C_{\mathrm{L}}}\) &
\(\boldsymbol{C_{\mathrm{R}}}\) &
\(\boldsymbol{C_{\mathrm{R,sf}}}\) \\
\hline
\(f_{r,1}=3.125\times10^{-3}\) & \(1.621\times10^{-6}\) & \(8.204\times10^{-7}\) & \(8.105\times10^{-7}\) \\
\(f_{r,2}=6.25\times10^{-3}\)  & \(8.468\times10^{-6}\) & \(5.378\times10^{-6}\) & \(4.993\times10^{-6}\) \\
\hline
\end{tabular}%
}
\end{table}

\begin{figure}[!t]

\centering

\subfloat[]{%

\includegraphics[width=0.95\columnwidth]{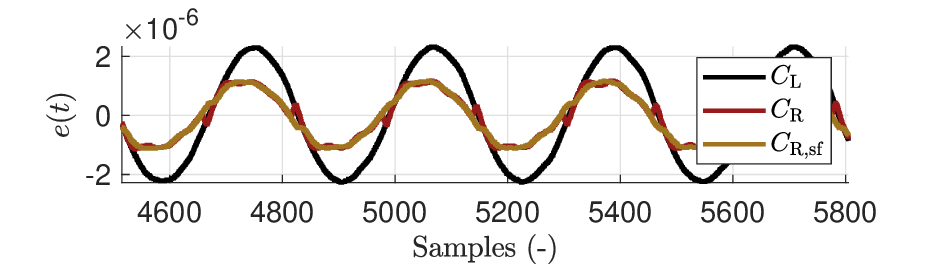}%

\label{fig:error_fr1}}

\vspace{1mm}

\subfloat[]{%

\includegraphics[width=0.95\columnwidth]{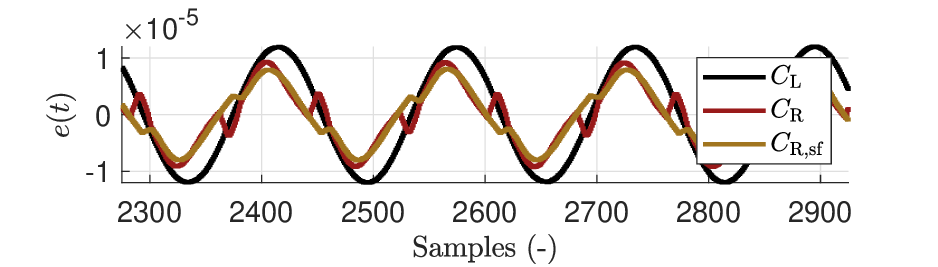}%

\label{fig:error_fr2}}

\caption{Measured tracking error \(e(t)\) for \(C_{\mathrm{L}}\), \(C_{\mathrm{R}}\), and \(C_{\mathrm{R,sf}}\): (a) \(f_{r,1}=0.003125\, \mathrm{Hz}\), (b) \(f_{r,2}=0.00625\, \mathrm{Hz}\).}

\label{fig:exp_error}

\end{figure}

\begin{figure}[!t]

\centering

\subfloat[]{%

\includegraphics[width=0.95\columnwidth]{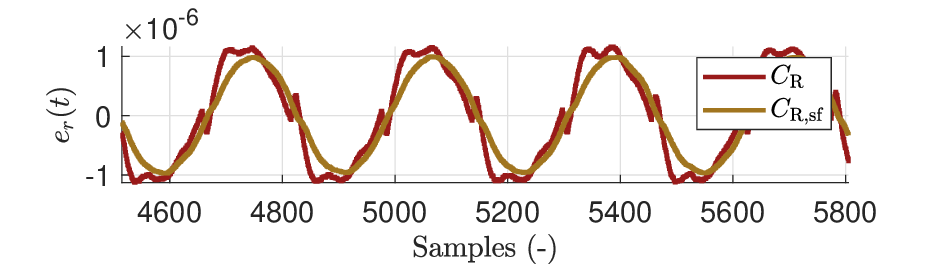}%

\label{fig:q_fr1}}

\vspace{1mm}

\subfloat[]{%

\includegraphics[width=0.95\columnwidth]{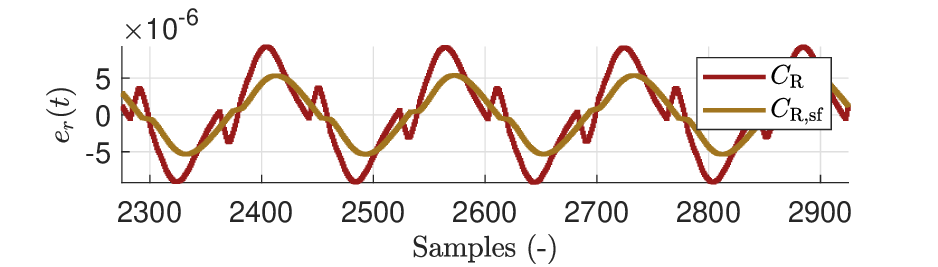}%

\label{fig:q_fr2}}

\caption{Measured reset-triggering signal \(q(t)\) for \(C_{\mathrm{R}}\) and \(C_{\mathrm{R,sf}}\): (a) \(f_{r,1}=0.003125\, \mathrm{Hz}\), (b) \(f_{r,2}=0.00625\, \mathrm{Hz}\).}

\label{fig:exp_q}

\end{figure}
\section{Conclusion}\label{sec: Conclusion}

This paper addressed the reliability of reset instants in frequency-domain reset-control design. Although HOSIDFs provide information about the higher-order harmonic content generated by reset elements, their impact on the actual reset-triggering behavior is not immediately evident from the HOSIDF magnitudes alone. This is important because the performance advantage of reset control originates from its nonlinear reset action; therefore, the timing and regularity of the reset instants must also be considered when assessing a design. The proposed metrics make this connection explicit by translating the harmonic content of the reset-triggering signal into indicators of reset-time deviation and proximity to additional zero crossings. As shown in the case study, a reset controller may appear favorable from the first-order closed-loop response while still exhibiting unreliable reset-triggering behavior. The metrics, therefore, provide an additional design layer beyond conventional sensitivity-based assessment.

The experimental results on the motion stage support this interpretation. Both reset-based controllers reduced tracking error relative to the baseline LTI controller, while the shaped reset controller produced a smoother reset-triggering signal, consistent with the reliability analysis. Hence, the proposed framework helps evaluate not only whether reset control improves performance, but also whether the reset actions responsible for that improvement occur in a reliable and predictable manner. Future work will investigate higher-order shaping-filter designs to further reduce undesired nonlinearities and study how different filter placements affect the zero-crossing behavior of the reset-triggering signal.




\appendix   
\section{Proof of Theorem~\ref{thm:closed_loop_shaping}}\label{sec: App A}

The expressions for \(y(t)\) and \(z(t)\) follow directly from the closed-loop
frequency-domain relations derived in~\cite{LukeLure}. In particular, the
additional output row \([G_{wq}(s)\;\;G_{uq}(s)]\) used to define \(q(t)\) does not alter the interconnection
between \(w\), \(u\), \(y\), and \(z\). Therefore, the closed-loop relations from
which \(Y_n(\omega)\) and \(Z_n(\omega)\) are obtained remain unchanged.
The only difference is that the input-output behavior of the reset element is now
described by the SIDF/HOSIDFs \(H_n(\omega)\) of the reset element with shaping
filter, as defined in~\eqref{eq: Hn}. Substituting these \(H_n(\omega)\) into the
closed-loop expressions of~\cite{LukeLure} gives
\eqref{eq: y_periodic_shaping}--\eqref{eq: Syn_shaping} and
\eqref{eq: z_periodic_shaping}--\eqref{eq: Szn_shaping}.

It remains to derive the harmonic response of the reset-triggering signal
\(q(t)\). Let \(W(j\omega)=\hat{w}e^{j\phi_w}\). From the first harmonic of the
extended LTI relation in~\eqref{eq: G_IO_shaping},
\begin{equation}
Q_1(\omega)
=
G_{wq}(j\omega)W(j\omega)
+
G_{uq}(j\omega)U_1(\omega).
\end{equation}
Using \(U_1(\omega)=H_1(\omega)Y_1(\omega)\) and
\(Y_1(\omega)=S_{y,1}(\omega)W(j\omega)\), it follows that
\begin{equation}
Q_1(\omega)=S_{q,1}(\omega)W(j\omega),
\end{equation}
with \(S_{q,1}\) as in~\eqref{eq: Sq1_shaping}.

For \(n\neq 1\), the external input has no harmonic component at \(n\omega\), and
therefore
\begin{equation}
Q_n(\omega)=G_{uq}(nj\omega)U_n(\omega).
\end{equation}
From the harmonic-separator approximation and the result for \(Y_n(\omega)\),
\begin{equation}
U_n(\omega)
=
\frac{
H_n(\omega)S_{y,1}(\omega)
e^{j(n-1)\angle S_{y,1}(\omega)}
}
{1-G_{uy}(nj\omega)R_{\mathrm{bl}}(nj\omega)}
W(j\omega)e^{j(n-1)\angle W(j\omega)}.
\end{equation}
Substituting this expression into \(Q_n(\omega)=G_{uq}(nj\omega)U_n(\omega)\)
gives
\begin{align}
Q_n(\omega)
&=
G_{uq}(nj\omega)
\frac{
H_n(\omega)S_{y,1}(\omega)
e^{j(n-1)\angle S_{y,1}(\omega)}
}
{1-G_{uy}(nj\omega)R_{\mathrm{bl}}(nj\omega)}\dots\nonumber\\
&\qquad \qquad \,\, W(j\omega)e^{j(n-1)\angle W(j\omega)}
\nonumber\\
&=
S_{q,n}(\omega)W(j\omega)e^{j(n-1)\angle W(j\omega)},
\qquad n\neq 1,
\end{align}
where
\begin{equation}
S_{q,n}(\omega)
=
\frac{
G_{uq}(nj\omega)H_n(\omega)S_{y,1}(\omega)
e^{j(n-1)\angle S_{y,1}(\omega)}
}
{1-G_{uy}(nj\omega)R_{\mathrm{bl}}(nj\omega)}.
\end{equation}
For the first harmonic, one similarly obtains
\begin{equation}
Q_1(\omega)=S_{q,1}(\omega)W(j\omega),
\end{equation}
with
\begin{equation}
S_{q,1}(\omega)
=
G_{wq}(j\omega)
+
G_{uq}(j\omega)H_1(\omega)S_{y,1}(\omega).
\end{equation}
This concludes the proof.

\section{Interpretation of the Critical-Point Conditions}
\label{app: stationary_point_conditions}

This appendix clarifies the conditions used in the definitions of
\(\mathcal{E}_{2}(\omega)\) and \(\mathcal{E}_{>2,k}(\omega)\). The objective of
these sets is to identify the stationary points of the reset-triggering signal
\(q(t,\omega)\) that are relevant for the creation or quantification of
additional zero crossings. Since zero crossings are generated or removed through
local extrema approaching or moving away from the zero level, the conditions are
formulated in terms of \(\dot{q}(t,\omega)\), \(\ddot{q}(t,\omega)\), and the sign
of \(q(t,\omega)\).

First, the condition
\begin{equation}
\dot{q}(t,\omega)=0
\end{equation}
selects the stationary points of \(q(t,\omega)\), i.e., the local extrema. A
stationary point can only indicate proximity to an additional zero crossing if it
is a local minimum above zero or a local maximum below zero. These two cases are
compactly described by
\begin{equation}
q(t,\omega)\ddot{q}(t,\omega)>0 .
\end{equation}
Indeed, if \(q(t,\omega)>0\) and \(\ddot{q}(t,\omega)>0\), the point is a positive
local minimum. Similarly, if \(q(t,\omega)<0\) and
\(\ddot{q}(t,\omega)<0\), the point is a negative local maximum. In both cases,
the extremum is directed toward the zero level. Therefore, reducing
\(|q(t,\omega)|\) at such a point may cause the signal to touch zero and generate
an additional pair of zero crossings. This motivates the definition of
\(\mathcal{E}_{2}(\omega)\), where the smallest value of \(|q(t,\omega)|\) over
these critical stationary points measures how close the signal is to violating
the two-reset condition.

When \(N_r>2\), the additional zero crossings have already occurred. In this
case, the relevant stationary points are no longer the extrema that are merely
approaching the zero level, but the extrema inside the additional zero-crossing
loop. Such extrema are local maxima above zero or local minima below zero, which
are described by
\begin{equation}
q(t,\omega)\ddot{q}(t,\omega)<0 .
\end{equation}
Thus, if \(q(t,\omega)>0\), the condition implies
\(\ddot{q}(t,\omega)<0\), corresponding to a positive local maximum. If
\(q(t,\omega)<0\), it implies \(\ddot{q}(t,\omega)>0\), corresponding to a
negative local minimum. The magnitude of \(q(t,\omega)\) at these points
quantifies the strength of the additional zero-crossing loop and, consequently,
the severity of the violation of the two-reset condition.

The additional condition
\begin{equation}
\dot{q}_1(t,\omega)\ddot{q}(t,\omega)>0
\end{equation}
is used to select the stationary point associated with the additional
zero-crossing loop around the nominal reset instant \(t_k\). If the first
harmonic \(q_1(t)\) crosses zero with a positive slope, i.e.,
\(\dot{q}_1(t,\omega)>0\), this condition selects a local minimum of
\(q(t,\omega)\). Conversely, if \(q_1(t)\) crosses zero with a negative slope, it
selects a local maximum of \(q(t,\omega)\). Hence, the condition filters the
stationary point that corresponds to the deviation from the nominal
first-harmonic reset transition. Together with the restriction
\(t\in\mathcal{I}_k(\omega)\), this ensures that only the extrema associated with
the additional zero crossings around the nominal reset instant are considered,
while unrelated extrema elsewhere in the period are excluded.
\section{Proof of Proposition \ref{Pro: Shaping Filter}}
\label{app: proof shaping filter}

From the state-space realization in~\eqref{eq: shaping Filter}, the transfer function from \(y(t)\) to \(q(t)\) is
\begin{equation}
C_s(s)
=
\frac{1+s/\omega_1}
{1+s(\omega_1+\omega_r)/(\omega_1\omega_r)}.
\label{eq: Cs_transfer_selected}
\end{equation}
Therefore,
\begin{equation}
\varphi(\omega)
=
\angle C_s(j\omega)
=
\arctan\!\left(\frac{\omega}{\omega_1}\right)
-
\arctan\!\left(\frac{\omega(\omega_1+\omega_r)}{\omega_1\omega_r}\right).
\label{eq: Cs_phase_selected}
\end{equation}
Let \(\zeta(\omega):=\tan(\varphi(\omega))\). Using the tangent subtraction identity gives
\begin{equation}
\zeta(\omega)
=
-\frac{\omega\omega_1^2}
{\omega_1^2\omega_r+(\omega_1+\omega_r)\omega^2}.
\label{eq: zeta_selected}
\end{equation}

For the scalar first-order reset element, the term in~\eqref{eq: HOSIDFs all functions} that determines the low-frequency order of the HOSIDFs is
\begin{equation}
\omega\cos\varphi(\omega)-A_r\sin\varphi(\omega)
=
\omega\cos\varphi(\omega)+\omega_r\sin\varphi(\omega).
\end{equation}
Since \(\cos(\arctan \zeta)=1/\sqrt{1+\zeta^2}\) and
\(\sin(\arctan \zeta)=\zeta/\sqrt{1+\zeta^2}\), it follows that
\begin{align}
\omega\cos\varphi+\omega_r\sin\varphi
&=
\frac{\omega+\omega_r\zeta(\omega)}
{\sqrt{1+\zeta^2(\omega)}} \nonumber\\
&=
\frac{(\omega_1+\omega_r)\omega^3}
{\omega_1^2\omega_r+(\omega_1+\omega_r)\omega^2}
\frac{1}{\sqrt{1+\zeta^2(\omega)}}.
\label{eq: cancellation_term}
\end{align}
Hence,
\begin{equation}
\omega\cos\varphi+\omega_r\sin\varphi
=
\mathcal{O}(\omega^3),
\qquad \omega\to 0.
\label{eq: cubic_order}
\end{equation}

Now consider the nonzero higher-order harmonics, i.e., odd \(n\geq 3\). From~\eqref{eq: Hn},
\begin{equation}
H_n(\omega)
=
C_r(A_r-jn\omega)^{-1}B_r\Theta_\varphi(\omega).
\end{equation}
For \(A_r=-\omega_r\), \(B_r=1\), and \(C_r=\omega_r\),
\begin{equation}
\left|
C_r(A_r-jn\omega)^{-1}B_r
\right|
=
\frac{\omega_r}{\sqrt{\omega_r^2+n^2\omega^2}}
=
1+\mathcal{O}(\omega^2).
\label{eq: base_linear_asymptotic}
\end{equation}
Furthermore, as \(\omega\to0\),
\begin{equation}
\Omega(\omega)=1-\gamma+\mathcal{O}\!\left(e^{-\pi\omega_r/\omega}\right),
\qquad
\Lambda(\omega)=\omega_r^2+\mathcal{O}(\omega^2).
\label{eq: Omega_Lambda_asymptotic}
\end{equation}
Using~\eqref{eq: HOSIDFs all functions} together with~\eqref{eq: cubic_order} gives
\begin{equation}
|\Theta_\varphi(\omega)|
=
\frac{2\omega}{\pi}
\frac{(1-\gamma)}{\omega_r^2}
\mathcal{O}(\omega^3)
=
\mathcal{O}(\omega^4).
\end{equation}
Therefore,
\begin{equation}
|H_n(\omega)|=\mathcal{O}(\omega^4),
\qquad n=3,5,\ldots,
\end{equation}
which corresponds to a low-frequency slope of \(+80\) dB/dec.

For comparison, without the shaping filter, \(C_s(s)=1\) and thus \(\varphi(\omega)=0\). In that case,
\begin{equation}
\omega\cos\varphi+\omega_r\sin\varphi=\omega.
\end{equation}
Repeating the same asymptotic argument gives
\begin{equation}
|\Theta_0(\omega)|
=
\frac{2(1-\gamma)}{\pi\omega_r^2}\omega^2
+
\mathcal{O}(\omega^4),
\end{equation}
and consequently
\begin{equation}
|H_n(\omega)|=\mathcal{O}(\omega^2),
\qquad n=3,5,\ldots,
\end{equation}
which corresponds to a low-frequency slope of \(+40\) dB/dec. This proves the result.





\bibliographystyle{asmejour}   

\bibliography{Main} 



\end{document}